\newtheorem{thm}{Theorem}
\newtheorem{cor}{Corollary}
\newtheorem{lem}{Lemma}
\newtheorem{pro}{Proposition}
\theoremstyle{remark}
\newtheorem{rem}{Remark}
\theoremstyle{definition}
\newcommand{\CASE}[1]{\STATE \textbf{case} #1\textbf{:} \begin{ALC@g}}
	\newcommand{\ENDCASE}{\end{ALC@g}}
\newcommand{\DEFAULT}{\STATE \textbf{default:} \begin{ALC@g}}
	\newcommand{\ENDDEFAULT}{\end{ALC@g}}
\newcommand{\DEFAULTLINE}[1]{\STATE \textbf{default:} }
\newcounter{MYtempeqncnt}
\begin{document}

\title{
 $K$-Receiver Wiretap Channel: Optimal Encoding Order and Signaling Design 
}

\author{Yue Qi, \IEEEmembership{Member, IEEE},  Mojtaba Vaezi, \IEEEmembership{Senior Member, IEEE}, \\ and  H. Vincent Poor, \IEEEmembership{Life Fellow, IEEE}  
	\thanks{This paper was presented in part at IEEE International Symposium on Information Theory (ISIT), 2022 \cite{qi2022optimal}.

	Y. Qi and M. Vaezi are with the Department 
	of Electrical and Computer Engineering, Villanova University, Villanova,
	PA 19085 USA (e-mail: yqi@villanova.edu, mvaezi@villanova.edu). 

	H. V. Poor is  with the Department
	of Electrical and Computer Engineering, Princeton University, Princeton,
	NJ 08544 USA (e-mail: poor@princeton.edu).
	His work was supported by the U.S National Science Foundation under Grants CCF-1908308 and CNS-2128448. 
 } 
}
\maketitle 

\begin{abstract}
	The $K$-receiver wiretap channel is a channel model where a {transmitter} broadcasts $K$ independent messages to  $K$  intended receivers while keeping them secret from an eavesdropper. 
	The capacity region of the $K$-receiver multiple-input
	multiple-output (MIMO) wiretap channel has been characterized
	by using dirty-paper coding and stochastic encoding. However, $K$
	factorial encoding orders may need to be enumerated to evaluate
	the capacity region, which makes the problem intractable. In
	addition, even though the capacity region is known, the optimal
	signaling to achieve the capacity region is unknown. In this paper,
	we determine one optimal encoding order to achieve every point
	on the capacity region, and thus reduce the encoding complexity
	$K$ factorial times. We prove that the optimal decoding order
	for the $K$-receiver MIMO wiretap channel is the same as that
	for the MIMO broadcast channel without secrecy. To be specific,
	the descending weight ordering in the weighted sum-rate (WSR)
	maximization problem determines the optimal encoding order.
Next, to reach the border of the secrecy capacity region, we form a
	WSR maximization problem and apply the block successive
	maximization method to solve this nonconvex problem and find
	the input covariance matrices corresponding to each message.
	Numerical results are used to verify the optimality of the encoding
	order and to demonstrate the efficacy of the proposed signaling
	design.
	
\end{abstract} 

\begin{IEEEkeywords}
$K$-receiver wiretap channel, encoding order, covariance matrix, BC-MAC duality, convex optimization.  
\end{IEEEkeywords}

\IEEEpeerreviewmaketitle

\section{Introduction}

The \textit{multi-receiver wiretap channel}, depicted in Fig.~\ref{fig:system}, is a channel model in which a transmitter wants to transmit messages of $K$ legitimate receivers while keeping them confidential from an external eavesdropper.  This model, which is an extension of the well-known wiretap channel \cite{wyner1975wire}, is also known variously  as secure broadcasting \cite{ekrem2012degraded},  the wiretap broadcast channel (BC)  \cite{benammar2015secrecy}, and  non-orthogonal multiple access (NOMA) with an eavesdropper \cite{vaezi2019noma}.
In the multiple-input multiple-output (MIMO) multi-receiver wiretap channel, each node can have   an arbitrary number of antennas  \cite{ekrem2011secrecy}. 
The secrecy capacity region of the two-receiver channel was characterized in \cite{liu2010vector, bagherikaram2013secrecy}     
in which secret dirty-paper coding (S-DPC)  is proved to be optimal \cite{bagherikaram2013secrecy}.

\begin{figure}[t]
	\centering
	\includegraphics[width=0.45\textwidth]{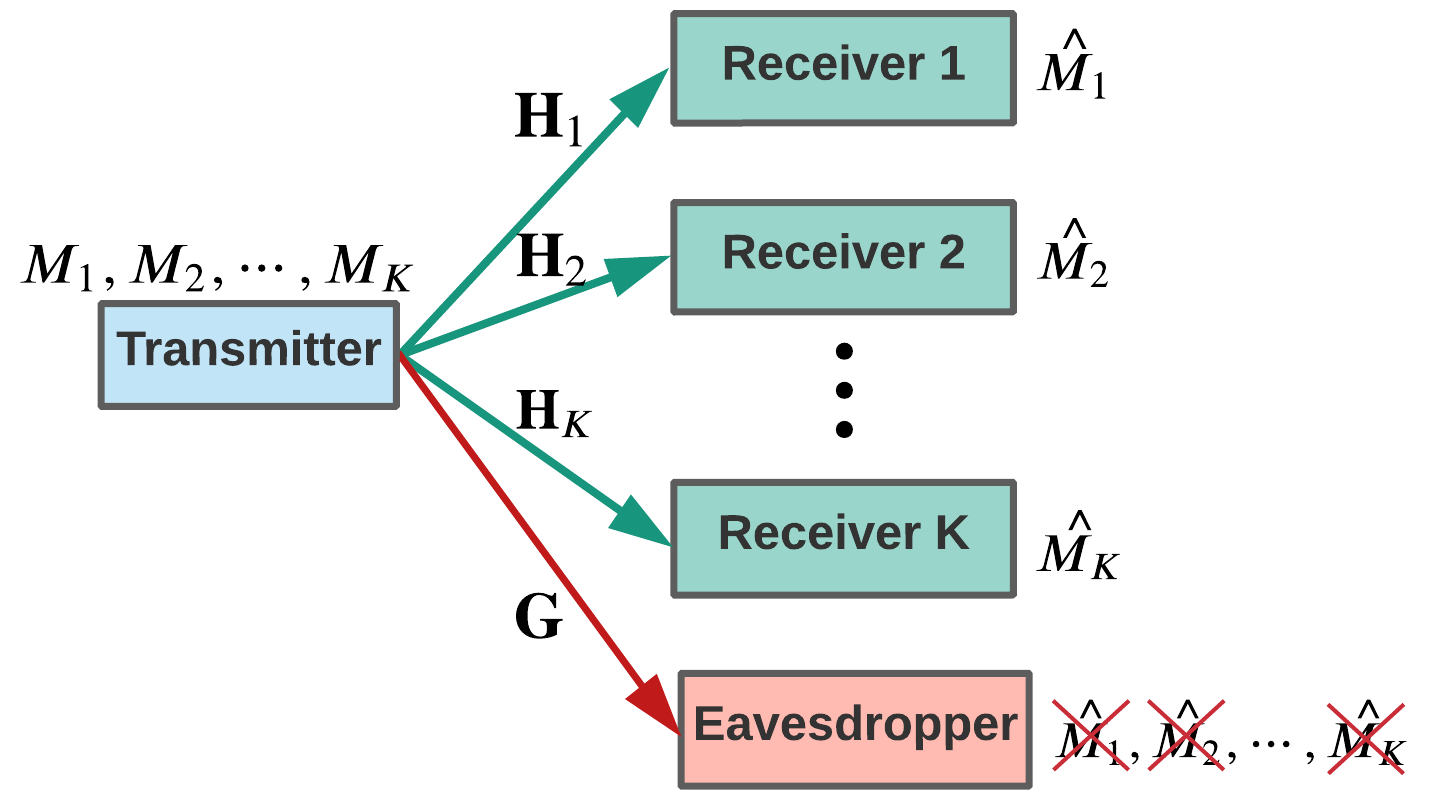}
	\caption{The $K$-receiver wiretap channel.}	
	\label{fig:system}
\end{figure}

Ekrem and Ulukus established the secrecy capacity region of the Gaussian MIMO $K$-receiver wiretap channel in \cite{ekrem2011secrecy}. Interestingly, a different order of encoding for receivers can result in a different achievable region. 
The capacity-achieving  rate region is then characterized by the convex closure of the union of achievable regions obtained by all  one-to-one permutations of encoding order.  Therefore,  to determine
the entire capacity region, all possible one-to-one permutations of receivers, i.e., $K!$  possible encoding orders, may need to be enumerated which makes the
problem intractable especially when the number of receivers becomes large.

In this paper, we prove that such  enumerations can be avoided and we find the optimal encoding order. To this end, we form a weighted sum-rate (WSR) maximization problem for the $K$-receiver wiretap channel,  convert the objective function to another equivalent function, and prove the optimal order for the new optimization problem. 
Our work shows that the optimal encoding order for  secure broadcasting is the same
as that for  insecure broadcasting (i.e., the MIMO-BC without 
secrecy). More specifically, the descending weight ordering in the WSR
maximization problem determines the optimal encoding order \cite{liu2008maximum}. The proof is non-trivial because, unlike the MIMO-BC, the  associated problem in the multi-receiver MIMO wiretap channel is non-convex and  BC to multiple access channel (MAC) duality  \cite{vishwanath2003duality} cannot be directly applied. 
Our proof shows that the problem can be transformed into a set of $K$ parallel wiretap channels with a common eavesdropper. There may also be a relation between these channels and the multi-access wiretap channel \cite{bagherikaram2010secure, xie2013secure}. 

This finding reduces the complexity of evaluating the WSR problem by $K$ factorial. 
When all weights are different, there is only one optimal order of encoding. However,  
when some weights are equal, different encoding orders can give different rate tuples (corner points of the capacity region) while all result in the same WSR. 
Particularly, the encoding order is not important when sum-capacity is the concern, i.e., when all weights are the same. Nonetheless, the order will determine which corner point of the capacity region  will be achieved.

In addition to identifying the encoding order, we propose an algorithm for  designing the transmit covariance matrix of each user to reach the border of the secrecy capacity region of this channel with an arbitrary number of antennas at each node. This is significant because although the secrecy capacity region of the $K$-receiver wiretap channel has been characterized in \cite{ekrem2011secrecy, ekrem2012degraded}, the proof uses  dirty-paper coding (DPC) which is not feasible in practice since the implementation of DPC requires complex random coding and high computational burden \cite{lee2007high}. Other existing solutions either apply only to special cases or fall far from the capacity region. 
 For example, covariance matrix design is known for the secrecy sum-capacity of this channel \cite{park2016secrecy}, but not the whole capacity region.  
	Solutions for the single-antenna receivers are developed in \cite{tang2016low, he2017design}.  Also, zero-forcing (ZF) beamforming has been explored in  \cite{park2016secrecy, tang2016low} for the general case but the result is far away from the capacity  region.   To reach the border of the capacity region, we form a WSR maximization problem and solve it by applying the block successive maximization method (BSMM). The BSMM algorithm  periodically 
	finds optimal solutions for a single block of variables while keeping other blocks of variables fixed at each iteration  \cite{razaviyaynunified, park2015weighted}.  
	This makes the problem convex and tractable at each iteration. 
	
	The main contributions of the work can be summarized as follows:
	
	\begin{itemize}	
		\item We formulate a WSR maximization problem for the  MIMO $K$-receiver wiretap channel.  The problem involves nonconvex functions of  covariance
		matrices of each message {which makes is hard to solve}. Nevertheless, we prove that  the WSR problem has zero duality gap and {thus} the Karush-Kuhn-Tucker (KKT) conditions are necessary for the optimal solutions. 
		This claim helps to prove the optimal encoding order.  
		\item
		We prove that $K!$  enumerations to determine the  capacity region are not necessary, and we find the optimal encoding order. 
		This is completed by transforming the WSR maximization to another equivalent problem, and proving the optimal order for the new optimization problem using the KKT conditions and the dual problem. 
		\item  We next maximize the WSR and find optimal precoding and power allocation matrices (covariance matrices). 
		To tackle this non-convex optimization and  find the optimal covariance matrices for DPC-based rate region, we
		develop an iterative algorithm by applying the BSMM which
		maximizes a lower bound of the problem at each iteration. Using this method, the problem becomes a convex
		problem at each iteration. The proposed algorithm avoids the exhaustive search over all possible covariance matrices.  
		\item  Finally, we study the signaling design for  the communication scenario with two receivers and derive a closed-form solution in this case. 
	\end{itemize}

%

%
The remainder of this paper is organized as follows. In Section~\ref{Sec:II}, we discuss the $K$-receiver wiretap channel model and its special case with two and one receivers.
We introduce the WSR for the $K$-receiver case in Section~\ref{sec:order} and prove the optimal encoding order. We solve the WSR problem by applying the BSMM in Section~\ref{sec:WSR}.  The two-receiver scenario is considered in Section~\ref{sec:tworeceiver}.  
We then present numerical results in  Section~\ref{sec:simulation} and conclude the paper in Section~\ref{sec:conclusion}.

\textit{Notation:} $\rm{tr}(\cdot)$ and $(\cdot)^{\dagger}$ denote trace and Hermitian of matrices. $\mathbb{E}\{\cdot\}$ denotes expectation. ${\rm{diag}}(\lambda_1, 
\dots, \lambda_n)$ represents the diagonal matrix with elements  
$\lambda_1, 
\dots, \lambda_n$. $\mathbf{Q} \succcurlyeq \mathbf{0} $ means that $\mathbf{Q}$ is a positive semidefinite matrix. $[x]^{+}$ gives the maximum value of $x$ and 0. $\mathbf{I}$ is the identity matrix and $|\mathbf{Q}|$ represents the determinant of $\mathbf{Q}$.
Logarithms are taken to the base $e$ and consequently the units of information are nats.


\section{System Model} \label{Sec:II}
We study  a $K$-receiver MIMO wiretap channel with an eavesdropper. The channel model is depicted in 
Fig.~\ref{fig:system}. 
In this problem, a transmitter sends  $K$ independent  messages to $K$ legitimate receivers in the presence of one eavesdropper. Each message $M_k$ for receiver $k$, $k= 1, 2,\dots, K$, should be kept secret from the eavesdropper. There is no cooperation  among legitimate receiver, and the transmitter, receiver $k$, and  the eavesdropper are equipped with $n_t$, $n_k$, and $n_e$ antennas, respectively.   
The received signals at the legitimate receivers and eavesdropper are given by
\begin{subequations}\label{eq:signal model}
	\begin{align} 
		\mathbf{y}_k  &= \mathbf{H}_k  \mathbf{x} +\mathbf{w}_k, \; k= 1, 2,\dots, K\\
		\mathbf{y}_e &= \mathbf{G}\mathbf{x} + \mathbf{w}_e, 
	\end{align}
\end{subequations}
where the channels of the $k$th legitimate receiver $\mathbf{H}_k$  and  eavesdropper $\mathbf{G}$ are ${n_k \times n_t}$ and ${n_e \times n_t}$ complex matrices in which the elements of the channels are drawn from  independent and identically distributed (i.i.d.) complex Gaussian distributions  and  $\mathbf{w}_k$  and  $\mathbf{w}_e$ are i.i.d. complex Gaussian random vectors  whose elements are zero mean and 
unit variance. Perfect knowledge of channel state information at the transmitter is assumed.  
The $K$ independent signals are superimposed and encoded by a Gaussian codebook \cite{ekrem2011secrecy}, and thus, the transmit signal is $\mathbf{x}=\sum_{k=1}^{K}\mathbf{x}_k$ where $\mathbf{x}_k \sim \mathcal{CN}(0, \mathbf{Q}_k)$.  $\mathbf{Q}_k\succcurlyeq  \mathbf{0}$ represents the  covariance matrix of $k$th receiver. 

	The channel input is subject to a power constraint. 
	While it is more common to define capacity regions
	under a total power constraint, i.e., ${\rm tr}(\mathbb{E}\{\mathbf{x}\mathbf{x}^{\dagger}\}) \leq P$, the capacity region of this channel is first proved based on a covariance matrix constraint, i.e., $\mathbb{E}\{\mathbf{x}^{\dagger}\mathbf{x}\} \preccurlyeq \mathbf{S}$  \cite{ekrem2011secrecy, ekrem2012degraded}.
	Once the capacity region is obtained under a covariance constraint $\mathcal{C}(\{\mathbf{H}_k\}_{k=1}^{K}, \mathbf{G}, \mathbf{S})$,  the capacity region
	under more relaxed constraints on the channel inputs  $\mathcal{C}(\{\mathbf{H}_k\}_{k=1}^{K}, \mathbf{G}, P)$ can be obtained by union over compact sets of input covariance matrices as
	\begin{align} \label{constraintP}
	\mathcal{C}(\{\mathbf{H}_k\}_{k=1}^{K}, \mathbf{G}, P) = \bigcup \limits_{\mathbf{S} \succcurlyeq \mathbf{0}, {\rm tr}(\mathbf{S}) \leq P} \mathcal{C}(\{\mathbf{H}_k\}_{k=1}^{K}, \mathbf{G}, \mathbf{S}).
	\end{align}
	%
	In short, the channel input should satisfy
	\begin{align} \label{eq:powercons}
	{\rm tr}(\mathbb{E}\{\mathbf{x}\mathbf{x}^{\dagger}\}) = \sum_{k=1}^{K}{\rm tr}(\mathbf{Q}_k) \leq {\rm tr}(\mathbf{S}) \leq P.
	\end{align}


\subsection{$K$-Receiver  Wiretap Channel} \label{k_receiver}
Let $\pi = [\pi_1, \pi_2, \dots, \pi_K]$ be a permutation function on the set $\{1, \dots, K\}$ and $\pi_k=j$ represent that the $k$th element of that arrangement is $j$. By applying DPC and stochastic
encoding, the achievable secrecy rate at receiver $\pi_k$ under the encoding order $\pi$ is given by \cite{ekrem2011secrecy, park2016secrecy}
\begin{align}\label{eq_KUrate}
R_{\pi_k}=&\log 
\frac{\left|
	\mathbf{I}_{n_k}+\mathbf{H}_{\pi_k}
	\left(\sum_{j=k}^{K}\mathbf{Q}_{\pi_j}\right)\mathbf{H}_{\pi_k}^\dagger
	\right|}{\left|
	\mathbf{I}_{n_k}+\mathbf{H}_{\pi_k}
	\left(\sum_{j=k+1}^{K}\mathbf{Q}_{\pi_j}\right)\mathbf{H}_{\pi_k}^\dagger
	\right|} \notag \\
& \quad \quad - \log 
\frac{\left|
	\mathbf{I}_{n_e}+\mathbf{G}
	\left(\sum_{j=k}^{K}\mathbf{Q}_{\pi_j}\right)\mathbf{G}^\dagger
	\right|}{\left|
	\mathbf{I}_{n_e}+\mathbf{G}
	\left(\sum_{j=k+1}^{K}\mathbf{Q}_{\pi_j}\right)\mathbf{G}^\dagger
	\right|},
\end{align}
in which $k=1, 2,  \dots, K$. Since $\log$ represents the natural logarithm, the units of rates are ${\rm nats/sec/Hz}$ in this paper. The secrecy capacity region under the total power constraint \eqref{eq:powercons} is characterized by the convex closure of the union of the DPC rate region over all possible one-to-one permutations $\pi$ \cite[Theorem 4]{ekrem2011secrecy}. Thus, in general, $K!$ possible  encoding orders may need to be enumerated to determine
the capacity region. 

\subsection{Two-Receiver Wiretap Channel}
This is a special case of the $K$-receiver ($K=2$) scenario in Section~\ref{k_receiver}. The transmitter superimposes two independent messages $M_1$ and $M_2$ intended for receiver 1 and receiver 2, respectively. The messages should be kept confidential from an external {eavesdropper}.
The secrecy capacity region of the two-receiver wiretap channel is given by \cite[Theorem 13]{ekrem2010secure}
\begin{align} \label{eq: level1rate}
\rm{conv} \bigg\{\mathcal{R}_{12} \cup \mathcal{R}_{21} \bigg\}
\end{align}
in which $\rm{conv}$ denotes the convex hull. 
$\mathcal{R}_{12}$   consists of all rate pairs $(R_{1}, R_{2})$ satisfying
\begin{align} \label{scenario1}
&R_{1} \leq   \log|\mathbf{I}_{n_1} + ({\mathbf{I}_{n_1} +\mathbf{H}_1 \mathbf{Q}_2 
	\mathbf{H}_1^\dagger})^{-1}\mathbf{H}_1 
\mathbf{Q}_1 \mathbf{H}_1^\dagger |  \notag\\
& \quad \quad \quad -\log |\mathbf{I}_{n_e} + ({\mathbf{I}_{n_e} +\mathbf{G} \mathbf{Q}_2 \mathbf{G}^\dagger})^{-1}{\mathbf{G} \mathbf{Q}_1 \mathbf{G}^\dagger}|,  \notag \\
&R_{2} \leq  \log|\mathbf{I}_{n_2} + { \mathbf{H}_2 \mathbf{Q}_2 \mathbf{H}_2^\dagger}|-\log|\mathbf{I}_{n_e} +  \mathbf{G} \mathbf{Q}_2 \mathbf{G}^\dagger|.  
\end{align}
To obtain the achievable rate region $\mathcal{R}_{12}$, the encoding order is to encode 
message $M_1$ first and message $M_2$ next, i.e.,  $\pi_1 = 1$, $\pi_2 = 2$. $\mathcal{R}_{21}$ is obtained from $\mathcal{R}_{12}$ by swapping the subscripts 1 and 2 corresponding to different DPC encoding orders, i.e., $\pi_1 = 2$, $\pi_2 = 1$. Further, although the capacity region has been characterized as above, identifying the covariance matrices  to achieve the capacity is still unknown.

\subsection{One-Receiver Wiretap Channel}
It is worth mentioning that when there is only one legitimate receiver ($K=1$), the problem reduces to the MIMO wiretap
channel and \eqref{eq_KUrate} becomes the capacity expression
of the MIMO Gaussian wiretap channel,  established in  \cite{khisti2010secure, oggier2011secrecy, liu2009note}. 
Various linear precoding schemes have been designed for this case, including generalized singular value decomposition \cite{fakoorian2012optimal}, alternating optimization and water filling \cite{li2013transmit}, and rotation modeling \cite{vaezi2017journal, zhang2020rotation}. In this paper, we are interested in the $K$-receiver wiretap channel for $K \geq 2$, but our solutions apply to the case with  $K =1$.

\section{Optimal Encoding Order}\label{sec:order}
The WSR maximization for the $K$-receiver wiretap channel under a total power constraint $P$ is formulated as

\begin{align} \label{eq_WSR}
\varphi(P)=&\max \limits_{\mathbf{Q}_{\pi_k} \succcurlyeq\mathbf{0}}
\sum_{k=1}^K w_{\pi_k} R_{{\pi_k}},  \notag \\
&{\rm s.t.} \quad \sum_{k=1}^K{\rm tr}(\mathbf{Q}_{\pi_k}) \leq P,
\end{align}
in which $R_{k}$ is given in \eqref{eq_KUrate}. The weights $w_{\pi_1}, w_{\pi_2}, \dots, w_{\pi_K}$ are non-negative values adding to one. Solving  problem  \eqref{eq_WSR} for {different permutations $\pi$ results in}  different inner bounds (achievable regions)  of the secrecy capacity region.
The whole capacity region is then determined by enumerating $K!$ possible encoding orders  and finding the union of their convex closures, which makes the
problem intractable if the number of receivers is large.\footnote{Unlike this, in the MIMO-BC without security, there exists an optimal encoding order $\pi$ to maximize the WSR problem \cite{tse1998multiaccess, liu2008maximum}. The proof is determined by the well-known duality between BC and MAC \cite{vishwanath2003duality}.  
}  	

Before finding the optimal order, we first prove that KKT conditions are necessary for the optimal solutions. The Lagrangian of the problem  \eqref{eq_WSR}  is
\begin{align} \label{lagr}
L(\mathbf{Q}_{\pi_1}, \dots, \mathbf{Q}_{\pi_K}, \lambda) =
\sum_{k=1}^K w_{\pi_k} R_{{\pi_k}}-\lambda(\sum_{k=1}^K{\rm tr}(\mathbf{Q}_{\pi_k})-P),
\end{align} 
where $\lambda$ is the Lagrange multiplier related to the total power constraint.  The dual function is a maximization of the Lagrangian 
\begin{align} \label{dual}
g(\lambda)  = \max \limits_{\mathbf{Q}_{\pi_k} \succcurlyeq\mathbf{0}} L(\mathbf{Q}_{\pi_1}, \dots, \mathbf{Q}_{\pi_K}, \lambda),
\end{align}
and the dual problem is given by $\min \limits_{\lambda \geq 0}g(\lambda){ .}$
Usually, if the objective function and the constraints are convex, standard
convex optimization results guarantee that the primal problem \eqref{lagr} and the dual problem  have the same solution. For example, the  {MIMO-BC} (without eavesdropping) \cite{weingartens2006capacity} satisfies Slater’s condition and KKT conditions
are sufficient for optimality \cite{liu2008maximum, qi2021signaling}. When
convexity does not hold, there may exist a duality gap between the primal and dual problems. {That is,   nonconvex optimization problems generally have a nonzero duality gap, and primal and dual problems may not satisfy the KKT conditions. However, with the time-sharing condition, the duality gap can be zero even when the optimization problem is not convex \cite{yu2006dual, park2015weighted} and thus any pair of primal and dual
	optimal points must satisfy the KKT conditions   \cite[Chapter 5, pp. 243]{boyd2004convex}.} We have the following lemma.

\begin{lem} \label{lemma1}
	The problem in \eqref{eq_WSR} has a zero duality gap and the KKT conditions are necessary  for the optimal solution.  
\end{lem}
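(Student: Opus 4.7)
The plan is to verify the time-sharing condition of Yu and Lui~\cite{yu2006dual} for the nonconvex WSR maximization in \eqref{eq_WSR}, conclude that strong duality holds, and then invoke the standard fact that every primal--dual optimal pair under strong duality satisfies the KKT conditions~\cite[Sec.~5.5.3]{boyd2004convex}. The reason this route is natural is that the nonconvexity of \eqref{eq_WSR} originates solely from the eavesdropper-induced negative log-determinant terms in \eqref{eq_KUrate}, so convex-analysis-based tools fail; time-sharing arguments, which rely on the information-theoretic structure of the problem rather than on convexity of the individual rate expressions, are the appropriate substitute.

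First, I would denote by $\varphi(P)$ the optimal value of \eqref{eq_WSR} as a function of the total power budget and argue that $\varphi$ is concave in $P$. Pick any $P_1,P_2\geq 0$ and $\nu\in[0,1]$, and let $\{\mathbf{Q}^{(1)}_{\pi_k}\}$ and $\{\mathbf{Q}^{(2)}_{\pi_k}\}$ be optimizers achieving $\varphi(P_1)$ and $\varphi(P_2)$, with corresponding rate vectors $\mathbf{R}^{(1)}$ and $\mathbf{R}^{(2)}$. A standard channel-level time-sharing construction---using the first scheme on a fraction $\nu$ of the coding blocks and the second on the remaining $1-\nu$ fraction while preserving the encoding order $\pi$---produces an achievable secrecy rate vector $\nu\mathbf{R}^{(1)}+(1-\nu)\mathbf{R}^{(2)}$ at average input power $\nu P_1+(1-\nu)P_2$. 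Because the secrecy capacity region in \cite[Thm.~4]{ekrem2011secrecy} is defined as a convex closure and is therefore closed under time-sharing, this convex combination is feasible under the combined power budget, giving $\varphi(\nu P_1+(1-\nu)P_2)\geq \nu\varphi(P_1)+(1-\nu)\varphi(P_2)$. This is precisely the time-sharing condition of \cite{yu2006dual}.

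Second, by \cite[Thm.~1]{yu2006dual} the time-sharing condition forces a zero duality gap between the primal \eqref{eq_WSR} and the Lagrangian dual \eqref{dual}, even though each $R_{\pi_k}$ is a difference of log-determinant terms and is not concave in $\{\mathbf{Q}_{\pi_k}\}$. Third, the rate functions \eqref{eq_KUrate} are smooth in the positive semidefinite variables $\mathbf{Q}_{\pi_k}$ and Slater's condition is trivially met (for instance, by any $\mathbf{Q}_{\pi_k}\succ\mathbf{0}$ with $\sum_k{\rm tr}(\mathbf{Q}_{\pi_k})<P$), so zero duality gap together with differentiability forces every primal--dual optimal pair $(\{\mathbf{Q}_{\pi_k}^{\star}\},\lambda^{\star})$ to satisfy the KKT conditions~\cite[Sec.~5.5.3]{boyd2004convex}, establishing the lemma.

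The hard part is the concavity step, because \eqref{eq_WSR} is written for a single tuple of Gaussian covariances whereas time-sharing produces a nonstationary (block-switching) input that is not a single covariance matrix; the naive candidate $\nu\mathbf{Q}^{(1)}+(1-\nu)\mathbf{Q}^{(2)}$ does not work since \eqref{eq_KUrate} is not concave in $\mathbf{Q}$. The resolution is to interpret $\varphi(P)$ as the WSR attained on the capacity region of \cite[Thm.~4]{ekrem2011secrecy}, whose convex-closure definition inherently absorbs time-shared schemes; once this identification is made, the remainder of the argument is routine Lagrangian duality.
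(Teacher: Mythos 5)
Your proposal is correct and follows essentially the same route as the paper: both establish the time-sharing property of \cite{yu2006dual} by showing $\varphi(P)$ is concave in $P$ via a time-division argument that leans on the convex-closure definition of the secrecy capacity region, then conclude zero duality gap and KKT necessity from \cite[Ch.~5]{boyd2004convex}. The only difference is presentational---you argue concavity directly while the paper phrases it as a contradiction---and your closing remark about identifying $\varphi(P)$ with the WSR over the capacity region makes explicit a step the paper's proof leaves implicit.
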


\begin{proof} See Appendix~\ref{proofLemma}.
	\end{proof}

%

%
%

{In this section, we prove that once the weights are fixed, the encoding order in \eqref{eq_WSR} is determined. 
	Since the proof uses the BC-MAC duality \cite{vishwanath2003duality}, before providing the proof, we first briefly introduce this result  in the following. 
	
	\begin{pro} (BC-MAC Duality \cite{vishwanath2003duality})
	Suppose that receivers $1,  \dots, K$ are encoded sequentially. The  achievable rate of {receiver~$k$ for the MIMO-BC can be computed by}
	\begin{align}
	R^{\rm BC}_{k} = \log 
	\frac{\left|
		\mathbf{I}_{n_k}+\mathbf{H}_{k}
		\left(\sum_{j=k}^{K}\mathbf{Q}_{j}\right)\mathbf{H}_{k}^\dagger
		\right|}{\left|
		\mathbf{I}_{n_k}+\mathbf{H}_{k}
		\left(\sum_{j=k+1}^{K}\mathbf{Q}_{j}\right)\mathbf{H}_{k}^\dagger
		\right|}, \label{eq:BCsum}
	\end{align} 
	and, in {its} dual MIMO-MAC, the achievable  rate is given by
	\begin{align}\label{mac}
	R^{\rm MAC}_{k} = \log 
	\frac{\left|
		\mathbf{I}_{n_t}+	\sum_{j=1}^{k}\mathbf{H}^\dagger_{j}
		\mathbf{\Sigma}_{j}\mathbf{H}_{j}
		\right|}{\left|
		\mathbf{I}_{n_t}+\sum_{j=1}^{k-1}\mathbf{H}^\dagger_{j}
		\mathbf{\Sigma}_{j}\mathbf{H}_{j}
		\right|}.
	\end{align}
	\noindent Given a set of BC covariance matrices $\mathbf{Q}_{k}$, $k=1, \dots, K$, there are  MAC covariance matrices	$\mathbf{\Sigma}_{k}$, $k=1, \dots, K$, such that 	
	$R^{\rm MAC}_{k}=R^{\rm BC}_{k}$, $\forall k$, and $\sum_{k=1}^{K}{\rm tr}(\mathbf{Q}_{k})=\sum_{k=1}^{K}{\rm tr}(\mathbf{\Sigma}_{k})$ via the BC-MAC transformation, and vice versa. {Let us define}
	\begin{align}
	\mathbf{C}_{k}& \triangleq \mathbf{I}_{n_k}+\mathbf{H}_{k}
	(\sum_{j=k+1}^{K}\mathbf{Q}_{j})\mathbf{H}_{k}^\dagger, \notag \\
	\mathbf{D}_{k}&  \triangleq	\mathbf{I}_{n_t}+\sum_{j=1}^{k-1}\mathbf{H}^\dagger_{j}
	\mathbf{\Sigma}_{j}\mathbf{H}_{j}.
	\end{align} 
	Then, the BC and MAC covariance matrices can be expressed as \cite{vishwanath2003duality}
	\begin{align}
	\mathbf{Q}_{k} &= \mathbf{D}^{-1/2}_{k}\mathbf{E}_{k}\mathbf{F}^{\dagger}_{k}\mathbf{C}^{1/2}_{k}	\mathbf{\Sigma}_{k} \mathbf{C}^{1/2}_{k}\mathbf{F}_{k}\mathbf{E}^{\dagger}_{k}\mathbf{D}^{-1/2}_{k},	 \notag \\
	\mathbf{\Sigma}_{k} & =	\mathbf{C}^{-1/2}_{k}\mathbf{E}_{k}\mathbf{F}^{\dagger}_{k}\mathbf{D}^{1/2}_{k}	\mathbf{Q}_{k} \mathbf{D}^{1/2}_{k}\mathbf{F}_{k}\mathbf{E}^{\dagger}_{k}\mathbf{C}^{-1/2}_{k},
	\end{align}	
	in which  $\mathbf{E}_{k}$ and $\mathbf{F}_{k}$ are obtained by singular value decomposition of  $\mathbf{D}_{k}^{-1/2}\mathbf{H}^{\dagger}_ {k}\mathbf{C}_{k}^{-1/2}=\mathbf{E}_{k}\mathbf{\Lambda}_{k}\mathbf{F}_{k}^{\dagger}$, where $\mathbf{\Lambda}_{k}$ is a
	square and diagonal matrix.	
\end{pro}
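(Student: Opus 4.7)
The plan is to verify the BC--MAC duality by establishing two separate claims: (i) per-user rate equality, $R^{\rm MAC}_{k} = R^{\rm BC}_{k}$ for every $k$, and (ii) preservation of total transmit power, $\sum_{k=1}^{K} {\rm tr}(\mathbf{Q}_{k}) = \sum_{k=1}^{K} {\rm tr}(\mathbf{\Sigma}_{k})$, under the explicit covariance transformation given in the proposition. Only the BC-to-MAC direction needs to be worked out in detail, since the reverse direction follows by the same argument after swapping the roles of $(\mathbf{C}_{k}, \mathbf{E}_{k})$ and $(\mathbf{D}_{k}, \mathbf{F}_{k})$.

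For the rate equality, I would first bring both rates into a whitened form,
\begin{align*}
R^{\rm BC}_{k} &= \log\bigl|\mathbf{I} + \mathbf{C}_{k}^{-1/2}\mathbf{H}_{k}\mathbf{Q}_{k}\mathbf{H}_{k}^{\dagger}\mathbf{C}_{k}^{-1/2}\bigr|, \\
R^{\rm MAC}_{k} &= \log\bigl|\mathbf{I} + \mathbf{D}_{k}^{-1/2}\mathbf{H}_{k}^{\dagger}\mathbf{\Sigma}_{k}\mathbf{H}_{k}\mathbf{D}_{k}^{-1/2}\bigr|,
\end{align*}
by pulling $\mathbf{C}_{k}$ out of the BC ratio and $\mathbf{D}_{k}$ out of the MAC ratio. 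Then I would substitute the given formula for $\mathbf{Q}_{k}$ into $R^{\rm BC}_{k}$ and invoke the ``flipped'' identity $\mathbf{C}_{k}^{-1/2}\mathbf{H}_{k}\mathbf{D}_{k}^{-1/2} = \mathbf{F}_{k}\mathbf{\Lambda}_{k}\mathbf{E}_{k}^{\dagger}$ obtained by conjugating the stated SVD. The orthonormality relations $\mathbf{E}_{k}^{\dagger}\mathbf{E}_{k} = \mathbf{I}$ and $\mathbf{F}_{k}^{\dagger}\mathbf{F}_{k} = \mathbf{I}$ collapse the inner unitary factors, and applying $|\mathbf{I} + \mathbf{A}\mathbf{B}| = |\mathbf{I} + \mathbf{B}\mathbf{A}|$ once reduces $R^{\rm BC}_{k}$ to the canonical form $\log|\mathbf{I} + \mathbf{\Lambda}_{k}^{2}\mathbf{F}_{k}^{\dagger}\mathbf{C}_{k}^{1/2}\mathbf{\Sigma}_{k}\mathbf{C}_{k}^{1/2}\mathbf{F}_{k}|$. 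Plugging the formula for $\mathbf{\Sigma}_{k}$ into $R^{\rm MAC}_{k}$ and using the SVD in its original direction yields exactly the same canonical form, closing claim (i).

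For claim (ii), I would argue by induction on the number of users, exploiting the MAC-side recursion $\mathbf{D}_{k+1} = \mathbf{D}_{k} + \mathbf{H}_{k}^{\dagger}\mathbf{\Sigma}_{k}\mathbf{H}_{k}$. The base case $K = 1$ reduces to the classical single-user MIMO trace identity: with $\mathbf{C}_{1} = \mathbf{I}_{n_{1}}$ and $\mathbf{D}_{1} = \mathbf{I}_{n_{t}}$, the equality ${\rm tr}(\mathbf{Q}_{1}) = {\rm tr}(\mathbf{\Sigma}_{1})$ falls out after cyclically rearranging the trace and using $\mathbf{E}_{1}^{\dagger}\mathbf{E}_{1} = \mathbf{F}_{1}^{\dagger}\mathbf{F}_{1} = \mathbf{I}$. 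For the inductive step, I would peel off the $K$th layer by absorbing its contribution into a colored-noise covariance that augments the received covariance of the remaining $K-1$ users, apply the induction hypothesis to this reduced system, and verify that the residual difference at layer $K$ cancels in the sum by combining the recursive updates of $\mathbf{C}_{k}$ and $\mathbf{D}_{k}$ with cyclic trace manipulations on the transformation.

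The main obstacle I anticipate is precisely this sum-power step. The per-user rate equality is a direct algebraic verification once the SVD is in hand, but the trace identity does \emph{not} hold user-by-user (typically ${\rm tr}(\mathbf{Q}_{k}) \neq {\rm tr}(\mathbf{\Sigma}_{k})$ for individual $k$), so the equality must emerge from a telescoping cancellation across layers. Tracking this cancellation requires consistent use of the recursive definitions of $\mathbf{C}_{k}$ and $\mathbf{D}_{k}$ together with the SVD, and extra care is needed because $\mathbf{C}_{k}$ is built from the single receiver channel $\mathbf{H}_{k}$ while $\mathbf{D}_{k}$ aggregates all previously encoded receiver channels, so the BC and MAC sides are not manifestly symmetric.
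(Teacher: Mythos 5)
Your claim (i) is essentially correct, but note first that the paper itself offers no proof of this proposition: it is imported verbatim as background from \cite{vishwanath2003duality}, so what you are attempting is a re-derivation of that reference's theorem. For the rate equality, your steps do go through: the whitened forms are right, the conjugated SVD identity $\mathbf{C}_k^{-1/2}\mathbf{H}_k\mathbf{D}_k^{-1/2}=\mathbf{F}_k\mathbf{\Lambda}_k\mathbf{E}_k^\dagger$ holds, and substituting the stated transformation, collapsing $\mathbf{E}_k^\dagger\mathbf{E}_k=\mathbf{I}$ and $\mathbf{F}_k^\dagger\mathbf{F}_k=\mathbf{I}$, and applying $|\mathbf{I}+\mathbf{A}\mathbf{B}|=|\mathbf{I}+\mathbf{B}\mathbf{A}|$ reduces both $R_k^{\rm BC}$ and $R_k^{\rm MAC}$ to $\log|\mathbf{I}+\mathbf{\Lambda}_k\mathbf{F}_k^\dagger\mathbf{C}_k^{1/2}\mathbf{\Sigma}_k\mathbf{C}_k^{1/2}\mathbf{F}_k\mathbf{\Lambda}_k|$. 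One point you should make explicit: the transformation is recursive ($\mathbf{D}_k$ requires $\mathbf{\Sigma}_j$ for $j<k$, while $\mathbf{C}_k$ requires $\mathbf{Q}_j$ for $j>k$), so the BC-to-MAC map must be executed in the order $k=1,\dots,K$ and the MAC-to-BC map in the reverse order; otherwise the matrices appearing in your formulas are not yet defined.

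The genuine gap is in claim (ii), which you correctly flag as the crux but do not actually prove. Concretely: (a) your base case is false as stated. For $K=1$, cyclic rearrangement gives ${\rm tr}(\mathbf{Q}_1)={\rm tr}(\mathbf{\Sigma}_1\mathbf{F}_1\mathbf{F}_1^\dagger)$, and $\mathbf{F}_1\mathbf{F}_1^\dagger$ is only an orthogonal projection, not $\mathbf{I}_{n_1}$, because the requirement that $\mathbf{\Lambda}_k$ be square and diagonal forces the compact SVD whenever $\mathbf{H}_1$ lacks full row rank. Take $n_t=1$, $n_1=2$, $\mathbf{H}_1=(1,0)^T$, $\mathbf{\Sigma}_1={\rm diag}(p,q)$: then $\mathbf{Q}_1=p$, so ${\rm tr}(\mathbf{Q}_1)=p<p+q={\rm tr}(\mathbf{\Sigma}_1)$ while the rates still agree. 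Trace equality therefore holds only under the implicit (WLOG-for-capacity) assumption that no power is allocated to directions in the null space of the channel, and any proof must invoke that assumption; the identities $\mathbf{E}_1^\dagger\mathbf{E}_1=\mathbf{F}_1^\dagger\mathbf{F}_1=\mathbf{I}$ alone cannot deliver it. (b) The inductive step, where the entire difficulty lives, is a statement of intent ("verify that the residual difference cancels") rather than an argument: no telescoping identity is exhibited. Worse, induction on the proposition as stated cannot close, because removing user $K$ deletes $\mathbf{Q}_K$ from every $\mathbf{C}_k$, $k<K$, so the reduced $(K-1)$-user system has colored (non-identity) noise at each receiver, and your induction hypothesis — formulated for identity noise — does not apply to it. You would first need to strengthen the claim to arbitrary noise covariances, which amounts to re-proving the general "flipped channel" transformation of \cite{vishwanath2003duality}; that is exactly the nontrivial content (individual traces are not preserved, only the sum is) that remains missing from your proposal.
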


\begin{thm} \label{lemma2}
	The WSR problem in \eqref{eq_WSR} can be solved by the following equivalent optimization problem:
	\begin{subequations} \label{theorem}
		\begin{align} \label{eq_WSR1}
		&\max \limits_{\mathbf{\Sigma}_{\pi_k} \succcurlyeq\mathbf{0}}
		\sum_{k=1}^K (w_{\pi_k} - w_{\pi_{k-1}}) \times \notag \\
		& \bigg(\log|\mathbf{I}_{n_t} + \sum_{j=k}^{K}\mathbf{H}^{\dagger}_ {\pi_j}\mathbf{\Sigma}_ {\pi_j}\mathbf{H}_ {\pi_j}| - \log|\mathbf{I}_{n_e} + \sum_{j=k}^{K}\mathbf{G}_{\pi_j}^{\dagger}\mathbf{\Sigma}_ {\pi_j}\mathbf{G}_{\pi_j}\bigg), 	 \\ \label{eq:equavalent}
		&{\rm s.t.} \quad \sum_{k=1}^K{\rm tr}(\mathbf{\Sigma}_{\pi_k}) \leq P, \quad k=1,2, \dots, K 
		\end{align}
	\end{subequations}
	{where $w_{\pi_0}\triangleq0$ and $\mathbf{G}_{\pi_j}\triangleq \mathbf{C}_{\pi_j}^{1/2}\mathbf{F}_{\pi_j}\mathbf{E}^{\dagger}_{\pi_j}\mathbf{D}_{\pi_j}^{-1/2}\mathbf{G}^{\dagger}$. Further, the optimal} decoding order $\pi$ is a permutation of the set $\{1,  \dots, K\}$ such that the weights satisfy $w_{\pi_1} \leq w_{\pi_2} \leq  \dots \leq w_{\pi_K}$. The optimal encoding order is the reverse.
\end{thm}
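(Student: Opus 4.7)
The plan has two parts: first, to establish the equivalence between \eqref{eq_WSR} and \eqref{eq_WSR1} via BC--MAC duality combined with a telescoping identity, and second, to identify the optimal $\pi$ by an interchange argument supported by Lemma~\ref{lemma1}.

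For the equivalence, I would apply Proposition~1 user-by-user so that each legitimate BC log-det ratio inside $R_{\pi_k}$ becomes its MAC dual $\log\frac{|\mathbf{I}_{n_t}+\sum_{j=k}^{K}\mathbf{H}^\dagger_{\pi_j}\mathbf{\Sigma}_{\pi_j}\mathbf{H}_{\pi_j}|}{|\mathbf{I}_{n_t}+\sum_{j=k+1}^{K}\mathbf{H}^\dagger_{\pi_j}\mathbf{\Sigma}_{\pi_j}\mathbf{H}_{\pi_j}|}$ with equal per-user rate and equal sum power. For the eavesdropper rate, I would substitute the explicit map $\mathbf{Q}_{\pi_k}=\mathbf{D}_{\pi_k}^{-1/2}\mathbf{E}_{\pi_k}\mathbf{F}^\dagger_{\pi_k}\mathbf{C}_{\pi_k}^{1/2}\mathbf{\Sigma}_{\pi_k}\mathbf{C}_{\pi_k}^{1/2}\mathbf{F}_{\pi_k}\mathbf{E}^\dagger_{\pi_k}\mathbf{D}_{\pi_k}^{-1/2}$ into $\mathbf{G}\mathbf{Q}_{\pi_k}\mathbf{G}^\dagger$; this factors exactly as $\mathbf{G}^\dagger_{\pi_k}\mathbf{\Sigma}_{\pi_k}\mathbf{G}_{\pi_k}$ with $\mathbf{G}_{\pi_k}$ as defined in the theorem, placing the eavesdropper log-det ratio in the same $\sum_{j=k}^{K}$ MAC form as the legitimate term. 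A telescoping re-indexing $\sum_{k}w_{\pi_k}(\log B_k-\log B_{k+1})=\sum_{k}(w_{\pi_k}-w_{\pi_{k-1}})\log B_k$, with $w_{\pi_0}\triangleq 0$ and $B_{K+1}\triangleq 1$, applied separately to the legitimate and eavesdropper log-dets, then collapses the weighted sum into the objective of \eqref{eq_WSR1}.

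To identify the optimal $\pi$, I would view each summand of \eqref{eq_WSR1} as the secrecy-rate contribution of a ``parallel'' MIMO wiretap channel with common eavesdropper, input covariance $\sum_{j=k}^{K}\mathbf{\Sigma}_{\pi_j}$, and weight $(w_{\pi_k}-w_{\pi_{k-1}})$. For this decomposition to be a WSR with non-negative weights, each coefficient must satisfy $w_{\pi_k}\ge w_{\pi_{k-1}}$, giving $w_{\pi_1}\le w_{\pi_2}\le\cdots\le w_{\pi_K}$. To make the optimality rigorous, I would use an adjacent-interchange argument: suppose at the optimum an inversion $w_{\pi_k}>w_{\pi_{k+1}}$ exists and form $\tilde\pi$ by swapping positions $k$ and $k+1$. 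Because the sums $\sum_{j=k'}^{K}\mathbf{H}^\dagger_{\pi_j}\mathbf{\Sigma}_{\pi_j}\mathbf{H}_{\pi_j}$ with $k'\le k$ or $k'\ge k+2$ are invariant under the swap, only the coefficients at positions $k$ and $k+1$ and the sum at position $k+1$ are altered; Lemma~\ref{lemma1} (KKT necessity at any primal optimum) then allows me to compare the two primal values and verify that the swap does not decrease the WSR. Iterating the swaps yields the ascending ordering, and the optimal BC encoding sequence is its reverse.

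The principal obstacle I expect is the eavesdropper coupling: $\mathbf{G}_{\pi_k}$ depends through $\mathbf{C}_{\pi_k}$ and $\mathbf{D}_{\pi_k}$ on the covariances of users $j\ne k$, so the two-term swap comparison does not reduce to a clean two-user wiretap problem. I would handle this by leveraging the KKT stationarity guaranteed by Lemma~\ref{lemma1}: the stationarity conditions let me express the first-order change in the objective under a pairwise swap purely in terms of the optimal $\{\mathbf{\Sigma}_{\pi_k}\}$ and the common multiplier $\lambda$, bypassing the need to explicitly decouple the eavesdropper terms. Verifying that $\mathbf{C}_{\pi_k}^{1/2}$ and $\mathbf{D}_{\pi_k}^{-1/2}$ are well defined along the way is routine and follows from the positive definiteness of identity-plus-positive-semidefinite sums.
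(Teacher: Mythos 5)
Your derivation of the equivalence between \eqref{eq_WSR} and \eqref{theorem} matches the paper's route: the legitimate-receiver terms are converted via BC--MAC duality together with the telescoping identity $\sum_k w_{\pi_k}(\log B_k-\log B_{k+1})=\sum_k(w_{\pi_k}-w_{\pi_{k-1}})\log B_k$ (the paper cites \cite[Theorem 1]{liu2008maximum} for exactly this step), and the eavesdropper terms are handled as you describe, by pushing the covariance map $\mathbf{Q}_{\pi_k}=\mathbf{D}_{\pi_k}^{-1/2}\mathbf{E}_{\pi_k}\mathbf{F}^{\dagger}_{\pi_k}\mathbf{C}_{\pi_k}^{1/2}\mathbf{\Sigma}_{\pi_k}\mathbf{C}_{\pi_k}^{1/2}\mathbf{F}_{\pi_k}\mathbf{E}^{\dagger}_{\pi_k}\mathbf{D}_{\pi_k}^{-1/2}$ through $\mathbf{G}(\cdot)\mathbf{G}^{\dagger}$ to obtain $\mathbf{G}_{\pi_k}^{\dagger}\mathbf{\Sigma}_{\pi_k}\mathbf{G}_{\pi_k}$. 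This half is sound.

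The ordering half has a genuine gap. Your first argument --- that the coefficients $(w_{\pi_k}-w_{\pi_{k-1}})$ ``must'' be non-negative for the decomposition to be a WSR --- is not a proof: the telescoped objective is an algebraic identity valid for every permutation, with whatever signs the differences happen to have, so nothing in the rewriting forces non-negativity. Your fallback, the adjacent-interchange argument, runs into exactly the obstacle you name, and the proposed repair does not close it: swapping positions $k$ and $k+1$ is a discrete change of the optimization problem (it alters the transformed channels $\mathbf{G}_{\pi_j}$ through $\mathbf{C}_{\pi_j}$ and $\mathbf{D}_{\pi_j}$, hence the objective itself), so KKT stationarity at the optimum of one permutation gives no first-order control over the value attainable under the other; you would be comparing the optima of two distinct nonconvex programs rather than perturbing within one. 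The paper avoids the interchange entirely. It rewrites \eqref{lasteq} as $\max\sum_{k} w_{\pi_k}C_{\pi_k}$ with $C_{\pi_k}\ge 0$, subject to the cumulative (polymatroid-type) constraints $\sum_{j=k}^{K}C_{\pi_j}\le \log|\mathbf{I}+\sum_{j=k}^{K}\mathbf{H}^{\dagger}_{\pi_j}\mathbf{\Sigma}^{*}_{\pi_j}\mathbf{H}_{\pi_j}|-\log|\mathbf{I}+\sum_{j=k}^{K}\mathbf{G}^{\dagger}_{\pi_j}\mathbf{\Sigma}^{*}_{\pi_j}\mathbf{G}_{\pi_j}|$ evaluated at the optimal covariances, invokes Lemma~\ref{lemma1} so that the KKT conditions are necessary, and solves the stationarity system for the multipliers of these rate constraints, obtaining $\mu_{K-k+1}=w_{\pi_k}-w_{\pi_{k-1}}$. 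Dual feasibility $\mu_k\ge 0$ is what forces $w_{\pi_1}\le\cdots\le w_{\pi_K}$. If you wish to keep your structure, replace the interchange step with this dual-feasibility argument on the cumulative rate constraints; that is the missing idea.
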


\begin{proof}
	First,  by expanding and rewriting the WSR maximization problem \eqref{eq_WSR}, we have
	\begin{subequations}
		\begin{align}
		&\quad \max \limits_{\mathbf{Q}_{\pi_k} \succcurlyeq\mathbf{0}}	\sum_{k=1}^K w_{\pi_k} R_{\pi_k}  \notag \\
		&=	\max \limits_{\mathbf{Q}_{\pi_k} \succcurlyeq\mathbf{0}}\bigg( \sum_{k=1}^K  w_{\pi_k} \log 
		\frac{\left|
			\mathbf{I}_{n_k}+\mathbf{H}_{\pi_k}
			\left(\sum_{j=k}^{K}\mathbf{Q}_{\pi_j}\right)\mathbf{H}_{\pi_k}^\dagger
			\right|}{\left|
			\mathbf{I}_{n_k}+\mathbf{H}_{\pi_k}
			\left(\sum_{j=k+1}^{K}\mathbf{Q}_{\pi_j}\right)\mathbf{H}_{\pi_k}^\dagger
			\right|}  \notag \\
		&   - \sum_{k=1}^K w_{\pi_k} \log 
		\frac{\left|
			\mathbf{I}_{n_e}+\mathbf{G}
			\left(\sum_{j=k}^{K}\mathbf{Q}_{\pi_j}\right)\mathbf{G}^\dagger
			\right|}{\left|
			\mathbf{I}_{n_e}+\mathbf{G}
			\left(\sum_{j=k+1}^{K}\mathbf{Q}_{\pi_j}\right)\mathbf{G}^\dagger
			\right|}\bigg)  \\
		& \stackrel{(a)}{=} \max \limits_{\mathbf{\Sigma}_{\pi_k} \succcurlyeq\mathbf{0}} \sum_{k=1}^K (w_{\pi_k} - w_{\pi_{k-1}})\log|\mathbf{I}_{n_t} + \sum_{j=k}^{K}\mathbf{H}^{\dagger}_ {\pi_j}\mathbf{\Sigma}_{\pi_j}\mathbf{H}_ {\pi_j}| \notag \\ 
		&   + \max \limits_{\mathbf{Q}_{\pi_k} \succcurlyeq\mathbf{0}} \bigg( - \sum_{k=1}^K w_{\pi_k} \log 
		\frac{\left|
			\mathbf{I}_{n_e}+\mathbf{G}
			\left(\sum_{j=k}^{K}\mathbf{Q}_{\pi_j}\right)\mathbf{G}^\dagger
			\right|}{\left|
			\mathbf{I}_{n_e}+\mathbf{G}
			\left(\sum_{j=k+1}^{K}\mathbf{Q}_{\pi_j}\right)\mathbf{G}^\dagger
			\right|}\bigg)  \label{bc_wsr}\\
		&\stackrel{(b)}{=} \max \limits_{\mathbf{\Sigma}_{\pi_k} \succcurlyeq\mathbf{0}} \sum_{k=1}^K (w_{\pi_k} - w_{\pi_{k-1}}) \log|\mathbf{I}_{n_t} + \sum_{j=k}^{K}\mathbf{H}^{\dagger}_ {\pi_j}\mathbf{\Sigma}_ {\pi_j}\mathbf{H}_ {\pi_j}| \notag \\
		& + \max \limits_{\mathbf{Q}_{\pi_k} \succcurlyeq\mathbf{0}}\bigg( -\sum_{k=1}^K(w_{\pi_k} - w_{\pi_{k-1}}) \log|\mathbf{I}_{n_e} + \mathbf{G}\sum_{j=k}^{K}\mathbf{Q}_{\pi_j}\mathbf{G}^{\dagger}|\bigg)  \label{mac_wsr} \\
		&\stackrel{(c)}{=} 	\max \limits_{\mathbf{\Sigma}_{\pi_k} \succcurlyeq\mathbf{0}} \sum_{k=1}^K (w_{\pi_k} - w_{\pi_{k-1}}) \times  \notag  \\ 
		&\hspace{-2.5mm}  \bigg(\log|\mathbf{I}_{n_t} + \sum_{j=k}^{K}\mathbf{H}^{\dagger}_ {\pi_j}\mathbf{\Sigma}_ {\pi_j}\mathbf{H}_ {\pi_j}| - \log|\mathbf{I}_{n_e} + \sum_{j=k}^{K}\mathbf{G}_{\pi_j}^{\dagger}\mathbf{\Sigma}_ {\pi_j}\mathbf{G}_{\pi_j}|\bigg). \label{lasteq}
		\end{align}
	\end{subequations}
	Here, $(a)$ holds due to \cite[Theorem 1]{liu2008maximum}; $(b)$ holds by expanding the second term in \eqref{bc_wsr} and 
	rearranging the terms to obtain the second term in \eqref{mac_wsr}; and, $(c)$ is obtained by applying the BC-MAC duality and $\mathbf{G}_{\pi_j}\triangleq \mathbf{C}_{\pi_j}^{1/2}\mathbf{F}_{\pi_j}\mathbf{E}^{\dagger}_{\pi_j}\mathbf{D}_{\pi_j}^{-1/2}\mathbf{G}^{\dagger}$. 

		So far, we have proved that \eqref{theorem} is equivalent to \eqref{eq_WSR}. 
		Next, we prove the optimal order of encoding.  To this end, we first write \eqref{lasteq} in an equivalent form and then we use Lagrangian dual to prove the encoding order. Our proof has similarities with the proof of the optimal encoding order for the MIMO-MAC \cite[Theorem 1]{liu2008maximum}. Our problem is however different.  The  MIMO-MAC problem is convex, and thus, KKT conditions are necessary and sufficient for optimality. However, due to the non-convexity of the $K$-receiver wiretap channel in \eqref{eq_WSR}, and its equivalent representation in \eqref{lasteq}, KKT conditions are not sufficient for the optimal solution. Nonetheless, we could still obtain the KKT conditions because, as shown in Lemma~\ref{lemma1},	the problem in \eqref{eq_WSR} has a zero duality gap and the KKT conditions are necessary. 
		
	Rearranging \eqref{lasteq}, we can rewrite the WSR maximization problem as
			\begin{align} \label{macwiretap}
	\max \limits_{\mathbf{\Sigma}_{\pi_k} \succcurlyeq\mathbf{0}}	\sum_{k=1}^K w_{\pi_k} C_{\pi_k}  		
			\end{align}
			in which
			\begin{align} \label{macwiretap1}
			C_{\pi_k} =	&\log 
			\frac{\left|
				\mathbf{I}_{n_t}+\sum_{j=k}^{K}\mathbf{H}^\dagger_{\pi_j}
				\mathbf{\Sigma}_{\pi_j}\mathbf{H}_{\pi_j}
				\right|}{\left|
				\mathbf{I}_{n_t}+	\sum_{j=k+1}^{K}\mathbf{H}^\dagger_{\pi_j}
			\mathbf{\Sigma}_{\pi_j}\mathbf{H}_{\pi_j}
				\right|}    \notag \\
		&	 \quad \quad -  \log 
			\frac{\left|
				\mathbf{I}_{n_t}+\sum_{j=k}^{K}\mathbf{G}^\dagger_{\pi_j}
				\mathbf{\Sigma}_{\pi_j}\mathbf{G}_{\pi_j}
				\right|}{\left|
				\mathbf{I}_{n_t}+\sum_{j=k+1}^{K}\mathbf{G}^\dagger_{\pi_j}
				\mathbf{\Sigma}_{\pi_j}\mathbf{G}_{\pi_j}
				\right|},
		\end{align}
	and $C_{\pi_k} \geq 0$.
	With the above representation, the order of successive decoding is $\pi = [\pi_{1}, \pi_{2}, \dots, \pi_{K}]$ (i.e., $\pi_1$ will be decoded first, followed by $\pi_2$, $\dots$, followed by $\pi_K$). The active rate  constraints corresponding to decoding order $\pi$ are 
	\begin{subequations} \label{prooftheorm1}
	\begin{align}
	&	C_{\pi_K} \leq
	\log{|
		\mathbf{I}_{n_t}+\mathbf{H}^\dagger_{\pi_K}
		\mathbf{\Sigma}^{*}_{\pi_K}\mathbf{H}_{\pi_K}
		|}\notag \\
	& \quad \quad  \quad \quad \quad \quad \quad \quad  - \log{|
		\mathbf{I}_{n_t}+\mathbf{G}^\dagger_{\pi_K}
		\mathbf{\Sigma}^{*}_{\pi_K}\mathbf{G}_{\pi_K}
		|},  \\
		& C_{\pi_{K-1}}  + C_{\pi_K} \leq
	\log{|
		\mathbf{I}_{n_t}+\sum_{j=K-1}^{K} \mathbf{H}^\dagger_{\pi_j}
		\mathbf{\Sigma}^{*}_{\pi_j}\mathbf{H}_{\pi_j}
		|}  \\
	& \quad \quad  \quad \quad \quad \quad \quad \quad  - \log{|
		\mathbf{I}_{n_t}+\sum_{j=K-1}^{K}\mathbf{G}^\dagger_{\pi_j}
		\mathbf{\Sigma}^{*}_{\pi_j}\mathbf{G}_{\pi_j}
		|}, \notag
	\\
	& \quad \quad  \quad \quad \quad \quad \vdots \notag \\
	& C_{\pi_k} +  \dots + C_{\pi_K} \leq
	\log{|
		\mathbf{I}_{n_t}+\sum_{j=k}^{K} \mathbf{H}^\dagger_{\pi_j}
		\mathbf{\Sigma}^{*}_{\pi_j}\mathbf{H}_{\pi_j}
		|} \notag \\
	& \quad \quad  \quad \quad \quad \quad \quad \quad  - \log{|
		\mathbf{I}_{n_t}+\sum_{j=k}^{K}\mathbf{G}^\dagger_{\pi_j}
		\mathbf{\Sigma}^{*}_{\pi_j}\mathbf{G}_{\pi_j}
		|}, \label{bcinequal} 
	\\
	& \quad \quad  \quad \quad \quad \quad \vdots \notag \\
	& C_{\pi_1} +  \dots +C_{\pi_k}+   \dots + C_{\pi_K}  \leq
	\log{|
		\mathbf{I}_{n_t}+\sum_{j=1}^{K} \mathbf{H}^\dagger_{\pi_j}
		\mathbf{\Sigma}^{*}_{\pi_j}\mathbf{H}_{\pi_j}|} \notag \\
	& \quad \quad  \quad \quad \quad \quad \quad \quad  \quad  - \log{|
		\mathbf{I}_{n_t}+\sum_{j=1}^{K}\mathbf{G}^\dagger_{\pi_j}
		\mathbf{\Sigma}^{*}_{\pi_j}\mathbf{G}_{\pi_j}|}, 			\label{sumrate}
	\end{align}
\end{subequations}
where $\mathbf{\Sigma}^{*}_{\pi_j}$, $j = 1, \dots, K$, represent the optimal input covariance matrices that achieve the maximum WSR. 

{Since the duality gap is zero for this problem (we proved this for the equivalent problem in Lemma~\ref{lemma1}), the KKT conditions are necessary. 
}
Thus, let us define the KKT conditions. The primal feasibility of the KKT conditions is defined by
	\begin{align}
	&l_{k} \triangleq \sum_{j=k}^K C_{\pi_j}-\bigg(\log{|
		\mathbf{I}_{n_t}+\sum_{j=k}^{K} \mathbf{H}^\dagger_{\pi_j}
		\mathbf{\Sigma}^{*}_{\pi_j}\mathbf{H}_{\pi_j}
		|} \notag \\
	& \quad \quad  \quad - \log{|
		\mathbf{I}_{n_t}+\sum_{j=k}^{K}\mathbf{G}^\dagger_{\pi_j}
		\mathbf{\Sigma}^{*}_{\pi_j}\mathbf{G}_{\pi_j}
		|}\bigg) \leq 0.
	\end{align}The stationarity KKT condition is given by 
\begin{align}
\frac{\partial \sum_{k=1}^K{w}_k{C}_{\pi_k}}{\partial C_{\pi_k}} - \sum_{k=1}^K\mu_k \frac{\partial l_{k}}{\partial C_{\pi_k}}=0, \; k=1,2, \dots, K,
\end{align}
in which  $\mu_k \geq 0$,  $\forall k$, is the dual feasibility KKT multiplier \cite{boyd2004convex}.
Therefore, we have
\begin{align}
	\begin{bmatrix}
w_{\pi_1} \\
 \vdots\\
 w_{\pi_{K-1}}\\
 w_{\pi_{K}}
\end{bmatrix}=
	\mu_1\begin{bmatrix}
0 \\
\vdots\\
0\\
1
\end{bmatrix} +
	\mu_2\begin{bmatrix}
0 \\
\vdots\\
1\\
1
\end{bmatrix} + \dots + 
	\mu_K\begin{bmatrix}
1 \\
\vdots\\
1\\
1
\end{bmatrix}.
\end{align}
Solving for $\mu_k$, we get $\mu_{K-k+1} = w_{\pi_k} - w_{\pi_{k-1}}$. But, $\mu_k \geq 0$ and thus we must have $w_{\pi_k} - w_{\pi_{k-1}} \geq 0$. Hence, $w_{\pi_1} \leq  w_{\pi_{2}} \leq  \dots \leq w_{\pi_{K}}$, and the decoding order is a permutation $[\pi_{1}, \pi_{2}, \dots, \pi_{K}]$ satisfying
\begin{align}
w_{\pi_1} \leq  w_{\pi_{2}} \leq  \dots \leq w_{\pi_{K}}.
\end{align} 
This completes the proof.

	\end{proof}

To summarize, we have transformed the WSR maximization problem \eqref{eq_WSR} to \eqref{theorem} and then rearranged it into the form of \eqref{macwiretap}. This last representation is equivalent to the original problem, and thus, it satisfies the strong duality due to Lemma~\ref{lemma1}. Hence, the KKT conditions are necessary for \eqref{macwiretap}. Interestingly, the KKT conditions  uniquely produce one of the $K!$ encoding orders, when the weights in the WSR are unique.

	\begin{rem}
		The secrecy capacity region of the general Gaussian MIMO multi-receiver wiretap channel is characterized by the convex closure of the union of the DPC rate region over all possible one-to-one permutations $\pi$ \cite[Theorem 4]{ekrem2011secrecy}. Based on Theorem~\ref{lemma2}, only one encoding order is enough to achieve {each point on the boundary of} the secrecy capacity region, and $K!$  encoding orders are reduced to one order.  The optimal order is the same as that for the MIMO-BC, that is, a receiver	{with a higher} weight in the WSR should be encoded earlier.
	\end{rem}
	

\begin{cor}\label{corollary}
	When all weights are equal, i.e., $w_{\pi_k}= \frac{1}{K}, \forall k\in \{1,\dots,K\}$, the secrecy sum-rate is obtained. In this case, the objective function of the optimization problem in \eqref{eq_WSR1} reduces to $$ \frac{1}{K}\bigg(\log|\mathbf{I}_{n_t} + \sum_{j=1}^{K}\mathbf{H}^{\dagger}_ {\pi_j}\mathbf{\Sigma}_ {\pi_j}\mathbf{H}_ {\pi_j}| - \log|\mathbf{I}_{n_e} + \sum_{j=1}^{K}\mathbf{G}_{\pi_j}^{\dagger}\mathbf{\Sigma}_ {\pi_j}\mathbf{G}_{\pi_j}\bigg).$$
	In this case, any encoding order is optimal  since $w_{\pi_1} \leq  w_{\pi_{2}} \leq  \dots \leq w_{\pi_{K}}$  holds for any order. However, different encoding orders may result in different rate tuples $(R_{\pi_1}, R_{\pi_2}, \dots, R_{\pi_K})$. Yet, all of them will yield the same secrecy sum-rate. 
\end{cor}

\begin{cor}\label{corollary2}
For $K=1$, we get $w_{\pi_1}= 1$ and  the problem reduces to the well-known MIMO wiretap channel  \cite{khisti2010secure, oggier2011secrecy, liu2009note}. In this case, the objective function of the optimization problem in \eqref{eq_WSR1} becomes $$ \log|\mathbf{I}_{n_t} + \mathbf{H}^{\dagger}_ {\pi_1}\mathbf{\Sigma}_ {\pi_1}\mathbf{H}_ {\pi_1}| - \log|\mathbf{I}_{n_e} + \mathbf{G}_{\pi_1}^{\dagger}\mathbf{\Sigma}_ {\pi_1}\mathbf{G}_{\pi_1}|.$$ 
\end{cor}

\section{Signaling Design}\label{sec:WSR}

To reach the border of the secrecy capacity region, we need  to solve the WSR problem in \eqref{eq_WSR}. Unfortunately, this problem is non-convex and challenging to solve, even after finding the optimal encoding order. In this section, we propose a numerical algorithm to solve \eqref{eq_WSR}, i.e., to design transmit covariance matrices.

To this end, we apply BSMM. This algorithm  can be traced back to the alternative inexact block coordinate descent method proposed in \cite{razaviyaynunified}, which cyclically finds the optimal solution for a single block of variables while keeping other block variables fixed at each iteration. Later, a WSR problem of the two-user MIMO-BC with confidential messages was studied in  \cite{park2015weighted}, where BSMM was developed by specifying first-order Taylor expansion for the lower bound of  each convex function. However, the previous approaches \cite{razaviyaynunified, park2015weighted} are not directly applicable to the $K$-receiver wiretap channel as the system models are different, and the number of users is limited \cite{park2015weighted}. Thus, we derive and design the covariance matrices to reach the border of the secrecy capacity region.  For simplicity of derivations and notation, without loss of generality, we assume the encoding order is ascending order, that is $\pi_k=k$.

The BSMM updates covariance matrices by successively optimizing a lower bound of the local approximation of
$f(\mathbf{Q}_1, \mathbf{Q}_2, \dots, \mathbf{Q}_K) = L(\mathbf{Q}_1, \mathbf{Q}_2, \dots, \mathbf{Q}_K, \lambda)$ {given in \eqref{lagr}}. At  iteration $i$ of the algorithm, the variables $\mathbf{Q}^{(i)}_k$, $k = 1, 2,   \dots, K$, are updated by solving the following problem: \cite{razaviyaynunified, park2015weighted}
\begin{align} \label{bsm1}
	\mathbf{Q}^{(i)}_k = {\rm arg} \max_{\mathbf{Q}_k \succcurlyeq 0} f(\mathbf{Q}^{(i)}_1,  \dots, \mathbf{Q}^{(i)}_{k-1},	\mathbf{Q}_k, \mathbf{Q}^{(i-1)}_{k+1},  \dots,  \mathbf{Q}^{(i-1)}_K).
\end{align}

The function $f(\mathbf{Q}_1, \mathbf{Q}_2, \dots, \mathbf{Q}_K)$ can be written as the summation of  convex and concave functions
\begin{align} \label{conv_conca}
f(\mathbf{Q}_1, \mathbf{Q}_2, \dots, \mathbf{Q}_K) =f^{\rm ccv}_{k}(\mathbf{Q}_k, \mathbf{Q}_{\bar{k}}) + f^{\rm cvx}_{k}(\mathbf{Q}_k, \mathbf{Q}_{\bar{k}}),
\end{align}
\noindent where $\mathbf{Q}_{\bar{k}} = (\mathbf{Q}_1,   \dots, \mathbf{Q}_{k-1}, \mathbf{Q}_{k+1},   \dots, \mathbf{Q}_{K})$ represents all covariance matrices excluding $\mathbf{Q}_k$, $f^{\rm ccv}_{k}(\mathbf{Q}_k, \mathbf{Q}_{\bar{k}})$ is a concave function of $\mathbf{Q}_k$ with $\mathbf{Q}_{\bar{k}}$ fixed, and $f^{\rm cvx}_{k}(\mathbf{Q}_k, \mathbf{Q}_{\bar{k}})$ is convex function of $\mathbf{Q}_k$ with $\mathbf{Q}_{\bar{k}}$ fixed.
We can write these two functions as \eqref{krcvcx1} shown at the top of the next page. 

\begin{figure*}[!t] 
	\normalsize
	\setcounter{MYtempeqncnt}{\value{equation}}
	\setcounter{equation}{23}
	\begin{subequations}\label{iterationeq}
		\begin{align}	
		\hline
		&f_k^{\rm ccv}(\mathbf{Q}_k,\mathbf{Q}_{\bar{k}})=w_k 
		\log{|
			\mathbf{I}+({
				\mathbf{I}+\mathbf{H}_k
				\sum_{j=k+1}^{K}\mathbf{Q}_j\mathbf{H}_k^\dagger
			})^{-1}\mathbf{H}_k\mathbf{Q}_k\mathbf{H}_k^\dagger
			|}+\sum_{j=1}^{k-1}w_j\log\big|
		\mathbf{I}+\mathbf{G}
		\sum_{i=j+1}^{K}\mathbf{Q}_i\mathbf{G}^\dagger
		\big|
		-\lambda{\rm tr}(\mathbf{Q}_k),\\
		&f_k^{\rm cvx}(\mathbf{Q}_k,\mathbf{Q}_{\bar{k}})=
		-w_k\log{|
			\mathbf{I}+({
				\mathbf{I}+\mathbf{G}
				\sum_{j=k+1}^{K}\mathbf{Q}_j\mathbf{G}^\dagger
			})^{-1}\mathbf{G}\mathbf{Q}_k\mathbf{G}^\dagger
			|}
		+\sum_{j=1}^{k-1}w_j\log\frac{\left|
			\mathbf{I}+\mathbf{H}_j
			\sum_{i=j}^{K}\mathbf{Q}_i\mathbf{H}_j^\dagger
			\right|}{\left|
			\mathbf{I}+\mathbf{H}_j
			\sum_{i=j+1}^{K}\mathbf{Q}_i\mathbf{H}_j^\dagger
			\right|}\notag \\ 
		&\quad\quad\quad\quad\quad-\sum_{j=1}^{k-1}w_j\log\big|
		\mathbf{I}+\mathbf{G}
		\sum_{i=j}^{K}\mathbf{Q}_i\mathbf{G}^\dagger
		\big|
		+\sum_{j=k+1}^{K}w_jR_j-\lambda{\rm tr}(\sum_{j=1}^{K}\mathbf{Q}_j-\mathbf{Q}_k- P).
		%
		%
		%
		\end{align} \label{krcvcx1} 
	\end{subequations}
	\hrulefill
	\vspace*{4pt}
\end{figure*}
%
%
%
%

For the $i$th iteration, the convex function $f^{\rm cvx}_{k}(\mathbf{Q}_k, \mathbf{Q}^{(i)}_{\bar{k}})$ can be lower-bounded by its gradient \cite{razaviyaynunified}
\begin{align} \label{Talyor}
	f^{\rm cvx}_{k}(\mathbf{Q}_k, \mathbf{Q}^{(i)}_{\bar{k}}) \geq& 	f^{\rm cvx}_{k}(\mathbf{Q}^{(i)}_k, \mathbf{Q}^{(i)}_{\bar{k}}) + {\rm {tr}}[\mathbf{A}_{k}^{(i)\dag}	(\mathbf{Q}_k - \mathbf{Q}^{(i)}_k)],
\end{align}
in which $\mathbf{Q}^{(i)}_{\bar{k}} =( \mathbf{Q}^{(i)}_1,  \dots, \mathbf{Q}^{(i)}_{k-1},	\mathbf{Q}^{(i-1)}_{k+1},  \dots, \mathbf{Q}^{(i-1)}_K)$, where the first $k-1$ covariance matrices have been optimized in the $i$th iteration, while the $k+1$ to $K$ covariance matrices are from the previous $(i-1)$th iteration waiting to be optimized. The power price matrix is a partial derivative with respect to $\mathbf{Q}_k$  which is given by
\begin{align}
	\mathbf{A}_{k}^{(i)} =& \triangledown_{\mathbf{Q}_k} f^{\rm cvx}_{k}(\mathbf{Q}_k, \mathbf{Q}_{\bar{k}})|_{\mathbf{Q}^{(i)}_{k}, \mathbf{Q}^{(i)}_{\bar{k}}}.   \label{eq:A1}
\end{align}
Following the matrix identity \cite{magnus2019matrix},
$$\frac{\partial \log |\mathbf{A+BXC}|}{\partial \mathbf{X}}  = [(\mathbf{C(A+BXC)}^{-1})\mathbf{B}]^\dagger,$$  the partial derivatives of	$f^{\rm cvx}_{k}(\mathbf{Q}_k)$ with respect to $\mathbf{Q}_k$ are
\begin{align} \label{derivative}
	\mathbf{A}_k =& \triangledown_{\mathbf{Q}_k} f^{\rm cvx}_{k}(\mathbf{Q}_k, \mathbf{Q}_{\bar{k}})     \notag \\
	&=	-w_k \mathbf{G}^{\dagger}{(
		\mathbf{I}+	\mathbf{G}
		\sum_{j=k}^{K}\mathbf{Q}_j\mathbf{G}^\dagger)^{-1}}\mathbf{G} \notag \\
	&+\sum_{j=1}^{k-1}w_j\mathbf{H}_j^\dagger{(\mathbf{I}+\mathbf{H}_j
		\sum_{i=j}^{K}\mathbf{Q}_i\mathbf{H}_j^\dagger)^{-1}}\mathbf{H}_j \notag\\ 
		&-\sum_{j=1}^{k-1}w_j\mathbf{H}_j^\dagger{(\mathbf{I}+\mathbf{H}_j
			\sum_{i=j+1}^{K}\mathbf{Q}_i\mathbf{H}_j^\dagger)^{-1}}\mathbf{H}_j \notag\\ 
	&-\sum_{j=1}^{k-1} w_j
	\mathbf{G}^\dagger(\mathbf{I}+\mathbf{G}
	\sum_{i=j}^{K}\mathbf{Q}_i\mathbf{G}^\dagger)^{-1}\mathbf{G}.	
\end{align} 
A lower bound on $f(\mathbf{Q}_1, \mathbf{Q}_2, \dots, \mathbf{Q}_K)$ for $i$th iteration  can be obtained by substituting the right-hand terms in \eqref{Talyor} and \eqref{conv_conca} as follows:
\begin{align} \label{reformuP21}
f(\mathbf{Q}^{(i)}_1, \dots, \mathbf{Q}^{(i)}_{k-1},	\mathbf{Q}_k, \mathbf{Q}^{(i-1)}_{k+1},  \mathbf{Q}^{(i-1)}_K)  &\geq	 \notag \\
f^{\rm ccv}_{k}(\mathbf{Q}_k, \mathbf{Q}^{(i)}_{\bar{k}})+
f^{\rm cvx}_{k}(\mathbf{Q}^{(i)}_k, \mathbf{Q}^{(i)}_{\bar{k}})&
 \notag \\
+ {\rm {tr}}[\mathbf{A}_{k}^{(i)\dag}	(\mathbf{Q}_k - \mathbf{Q}^{(i)}_k)].&
\end{align}
On substituting \eqref{iterationeq} into the right-hand side of the inequality in \eqref{reformuP21} and omitting the constant terms $f^{\rm cvx}_{k}(\mathbf{Q}^{(i)}_k, \mathbf{Q}^{(i)}_{\bar{k}})$ and ${\rm {tr}}(\mathbf{A}_{k}^{(i)\dag}	\mathbf{Q}^{(i)}_k)$, the optimization of the remaining terms in each iteration can be expressed as  \eqref{reformuP3}, which is a convex problem.


\begin{figure*}[!tb]
	\hrulefill
	\normalsize
	\setcounter{MYtempeqncnt}{\value{equation}}
	\setcounter{equation}{28}
	\begin{align}  \label{reformuP3}
	\mathbf{Q}^{(i)}_k= {\rm arg}\max \limits_{\mathbf{Q}_k} \; w_k 
	\log{|
		\mathbf{I}+({
			\mathbf{I}+\mathbf{H}_k
			\sum_{j=k+1}^{K}\mathbf{Q}_j\mathbf{H}_k^\dagger
		})^{-1}\mathbf{H}_k\mathbf{Q}_k\mathbf{H}_k^\dagger
		|}+\sum_{j=1}^{k-1}w_j\log\big|
	\mathbf{I}+\mathbf{G}
	\sum_{i=j+1}^{K}\mathbf{Q}_i\mathbf{G}^\dagger
	\big|
	-{\rm tr}[(\lambda\mathbf{I}-\mathbf{A}_{k}^{(i)\dag})\mathbf{Q}_k], 
	\end{align} 
	\hrulefill
\vspace*{4pt}
\end{figure*}

\begin{rem}[\textit{Nash equilibrium \cite{scutari2013decomposition, park2015weighted}}]
	Any fixed point after the $i$th iteration $\mathbf{Q}^{(i)}$ is a Nash equilibrium of the optimization with power price game where each receiver tries to 	maximize the summation between its own benefit $f^{\rm ccv}_{k}(\mathbf{Q}_k, \mathbf{Q}^{(i)}_{\bar{k}})$  and the power price ${\rm Re}({\rm  {tr}}(\mathbf{A}_{k}^{(i)\dag}	\mathbf{Q}_k))$ along the direction of the gradient 
	\begin{align} \label{bsmm}
	{\rm arg} \max_{\mathbf{Q}_k \succcurlyeq 0} 
	f^{\rm ccv}_{k}(\mathbf{Q}_k, \mathbf{Q}^{(i)}_{\bar{k}})	+{\rm  {tr}}(\mathbf{A}_{k}^{(i)\dag}	\mathbf{Q}_k),
	\end{align}
\end{rem}
\noindent in which $\mathbf{A}_k = \triangledown_{\mathbf{Q}_k} f^{\rm cvx}_{k}(\mathbf{Q}_k, \mathbf{Q}_{\bar{k}})|_{\mathbf{Q}_k = \mathbf{Q}^{(i)}}$. Thus, $\mathbf{Q}^{(i)}$ is unilaterally optimal for the problem \eqref{bsmm} \cite[Proposition~2]{scutari2013decomposition, park2015weighted}.

The optimal numerical solution for  \eqref{reformuP3} can be solved by using any standard convex tool. 
  We choose the rotation-based linear precoding \cite{vaezi2017journal, zhang2020rotation} to reach the optimality. 

Algorithm~\ref{alg:WSRalgorithm} summarizes the WSR maximization, where
$\epsilon_1$ and $\epsilon_2$ are the bisection search accuracy and convergence tolerance, respectively. When $w_k=1$, the other weights are zero and the problem reduces to the  wiretap channel where receiver $k$ is the legitimate receiver. According to Lemma~\ref{lemma1}, Algorithm~\ref{alg:WSRalgorithm} converges to a KKT point and satisfies the necessary conditions for the optimal solution. However, the initial point may affect the speed of convergence.  $\mathbf{Q}_k^{(0)}:=\frac{P}{Kn_t}\mathbf{I}$ or $\mathbf{Q}_k^{(0)}:=\mathbf{0}$ are two possible choices.
 
 \begin{algorithm}[t]
	\caption{WSR maximization
	}\label{alg:WSRalgorithm}
	\begin{algorithmic}[1]
		\STATE	inputs:  $\lambda^{\max}:=10$, $\lambda^{\min}:=0.01$,	$\epsilon_1$, 		$\epsilon_2$, $(w_1, \dots, w_K)$
		\WHILE {$\lambda^{\max}-\lambda^{\min} > \epsilon_1$}
		\STATE$\lambda:=(\lambda^{\max}+\lambda^{\min})/2$; $i:=0$;
		\STATE
		$\mathbf{Q}_1^{(0)}=\mathbf{Q}_2^{(0)}= \dots=\mathbf{Q}_k^{(0)}:=\frac{P}{Kn_t}\mathbf{I}$; $R^{(0)}:=0$;
		\WHILE 1
		\STATE $i=i+1$;
		\FOR {$k=1:K$}
		\STATE Solve for $\mathbf{Q}_k^{(i)}$  in 
		\eqref{reformuP3};
		\STATE Compute $R_{k}$ using \eqref{eq_KUrate};
		\ENDFOR
		\STATE $R^{(i)}:=\sum_{k=1}^K w_k R_{k}$; 
		\IF {${{\rm abs}(R^{(i)} - R^{(i-1)})} < \epsilon_2$}
		\STATE break;
		\ENDIF
		\IF {$\sum_{k=1}^K{\rm tr}(  \mathbf{Q}_k^{(i)}) < P$}
		\STATE $\lambda^{\max}:=\lambda$;
		\ELSE
		\STATE $\lambda^{\min}:=\lambda$;
		\ENDIF
		\ENDWHILE
		\ENDWHILE
		\STATE outputs: $\lambda^{*}:=\lambda$, $R_{k}^{*}:=R_{k}$, and
		$\mathbf{Q}_k^{*}=\mathbf{Q}_k^{(i)}$.
	\end{algorithmic}
\end{algorithm}

\textit{Complexity analysis.}  The computation of matrix multiplications and the matrix inverse is of complexity $\mathcal{O}(m^3)$ where {$m= \max(n_t, n_k, n_e)$.} The number of iterations of the BSMM is $\mathcal{O}(1/\epsilon_1)$, and the bisection search requires  $\mathcal{O}(\log(1/\epsilon_2))$. In general, one WSR point achieved by Algorithm~\ref{alg:WSRalgorithm} has  complexity  $\mathcal{O}(\frac{m^3}{\epsilon_3 }\log({1}/{ \epsilon_2}))$ for each group of weights \cite{park2015weighted, qi2020secure}. Each group can realize one point on the boundary. To find the capacity, another exhaustive search over weights is needed.  	The capacity region achieved by DPC in \cite{ekrem2011secrecy} is found by using an exhaustive search over all permutations  where the covariance matrices range over all positive semi-definite matrices satisfying certain	power constraints. This makes the complexity exponentially increase with $m$.	{As we have proved in the last section, the weight ordering in the WSR maximization problem determines the optimal encoding order.} Thus, Algorithm~\ref{alg:WSRalgorithm}  avoids an exhaustive search over all permutations which can save $K!$ complexity to reach the capacity region. It also finds the covariance matrices by an iterative algorithm, not an exhaustive search.

\textit{Comments on the upper bounds.} 
When $\mathbf{G}$ gets close to $\mathbf{0}$ or there is gradually no information leakage in the system, and the system approaches its upper bound.   The upper bound is the standard MIMO-BC \cite{weingartens2006capacity}. 

An important feature of the proposed algorithm is its generality and its potential for extension to other related problems, for example, the  Gaussian $K$-receiver interference channel \cite{ordentlich2014approximate} and covert communication over the $K$-receiver multiple access channel \cite{arumugam2019covert}.
The numerical algorithm can also work with more practical constraints and assumptions, such as per-antenna power constraints, where each antenna has its own power amplifier and is independently limited by the linearity of the amplifier \cite{yu2007transmitter}. In addition, this paper assumes perfect channel state information of receivers and eavesdropper, while a fast-fading channel with rapidly varying channel coefficients is more realistic, because of 	the limited feedback and delay caused by the 	channel estimation \cite{lin2014secrecy}. 

 \section{Two-receiver Case} \label{sec:tworeceiver}
The two-receiver wiretap channel is very typical and often is discussed independently  \cite{ekrem2010secure,  benammar2015secrecy, liu2010vector, bagherikaram2013secrecy}.  {We thus provide the details and also derive a closed-form solution of the iteration for the special case in this section.}

The capacity region is given in \eqref{scenario1}. The WSR formation can be obtained by setting $K=2$. The iterative covariance matrices in \eqref{reformuP3} can be written as 
\begin{subequations}\label{bsmm1}
	\begin{align} 
	\mathbf{Q}^{(i)}_1 &= {\rm arg} \max_{\mathbf{Q}_1 \succcurlyeq 0} f(	\mathbf{Q}_1, 	\mathbf{Q}^{(i-1)}_2),\label{bsmm1_1}\\
	\mathbf{Q}^{(i)}_2 &= {\rm arg} \max_{\mathbf{Q}_2 \succcurlyeq 0} f(	\mathbf{Q}^{(i)}_1, \mathbf{Q}_2). \label{bsmm1_2}
	\end{align}
\end{subequations}

\noindent The function $f(\mathbf{Q}_1, \mathbf{Q}_2)$ can be rewritten as the summation of  convex and  concave functions
\begin{align} \label{conv_concaa}
f(\mathbf{Q}_1, \mathbf{Q}_2) =f^{\rm ccv}_{k}(\mathbf{Q}_k, \mathbf{Q}_{\bar{k}}) + f^{\rm cvx}_{k}(\mathbf{Q}_k, \mathbf{Q}_{\bar{k}}),
\end{align}
in which
\begin{subequations}\label{conv_concab}
	\begin{align} 
	f^{\rm ccv}_{1}(\mathbf{Q}_1, \mathbf{Q}_{\bar{2}})  = & w_1 \log|\mathbf{I}_{n_1} +(\mathbf{I}_{n_1} + \mathbf{H}_1 \mathbf{Q}_2
	\mathbf{H}_1^\dag)^{-1}\mathbf{H}_1 \mathbf{Q}_1
	\mathbf{H}_1^\dag|\notag \\
	 -& \lambda{\rm tr}(\mathbf{Q}_1),  \label{ccv1}  \\
	f^{\rm cvx}_{1}(\mathbf{Q}_1,  \mathbf{Q}_{\bar{2}})  = & -w_1 \log|\mathbf{I}_{n_e} +(\mathbf{I}_{n_e} + \mathbf{G} \mathbf{Q}_2
	\mathbf{G}^\dag)^{-1}\mathbf{G} \mathbf{Q}_1
	\mathbf{G}^\dag|   \notag \\
	+& w_2\log \frac{|\mathbf{I}_{n_2} + \mathbf{H}_2 \mathbf{Q}_2 \mathbf{H}_2^\dag |}{|\mathbf{I}_{n_e} + \mathbf{G} \mathbf{Q}_2 \mathbf{G}^\dag |}  - \lambda({\rm tr}(\mathbf{Q}_2)-P). \label{cvx1}\\
	f_{2}^{\rm ccv}(\mathbf{Q}_2,  \mathbf{Q}_{\bar{1}})  =& w_1 \log|\mathbf{I}_{n_e} + \mathbf{G} \mathbf{Q}_2
	\mathbf{G}^\dag |  \notag \\ 
	+& w_2\log|\mathbf{I}_{n_2} + \mathbf{H}_2 \mathbf{Q}_2 \mathbf{H}_2^\dag |  - \lambda{\rm tr}(\mathbf{Q}_2), \label{ccv2} \\
	f^{\rm cvx}_{2}(\mathbf{Q}_2, \mathbf{Q}_{\bar{1}})  =& w_1 \log|\mathbf{I}_{n_1} +(\mathbf{I}_{n_1} + \mathbf{H}_1 \mathbf{Q}_2
	\mathbf{H}_1^\dag)^{-1}\mathbf{H}_1 \mathbf{Q}_1
	\mathbf{H}_1^\dag|  \notag \\
	 -& w_1\log|\mathbf{I}_{n_e} + \mathbf{G} (\mathbf{Q}_1 + \mathbf{Q}_2) \mathbf{G}^\dag |   \notag \\
	-& w_2\log|\mathbf{I}_{n_e} + \mathbf{G} \mathbf{Q}_2\mathbf{G}^\dag | - \lambda({\rm tr}(\mathbf{Q}_1)-P).  \label{cvx2}
	\end{align}
\end{subequations}

The partial derivatives of	$f^{\rm cvx}_{k}(\mathbf{Q}_k)$ with respect to $\mathbf{Q}_k$ are
\begin{align} 
\mathbf{A}_1  
=- w_1 \mathbf{G}^{\dag} (\mathbf{I}_{n_e} +\mathbf{G} (\mathbf{Q}_1 + \mathbf{Q}_2)
\mathbf{G}^{\dag})^{-1}\mathbf{G},  \label{eq:A}
\end{align} 
and
\begin{align} 
\mathbf{A}_2   &= + w_1 \mathbf{H}_1^{\dag} (\mathbf{I}_{n_1} +\mathbf{H}_1(\mathbf{Q}_1 + \mathbf{Q}_2)\mathbf{H}_1^{\dag})^{-1}\mathbf{H}_1 \notag \\
& -w_1 \mathbf{H}_1^{\dag} (\mathbf{I}_{n_1} +\mathbf{H}_1\mathbf{Q}_2\mathbf{H}_1^{\dag})^{-1}\mathbf{H}_1 \notag \\
& -w_1 \mathbf{G}^{\dag} (\mathbf{I}_{n_e} +\mathbf{G} (\mathbf{Q}_1 + \mathbf{Q}_2)
\mathbf{G}^{\dag})^{-1}\mathbf{G} \notag \\
& -w_2 \mathbf{G}^{\dag} (\mathbf{I}_{n_e} +\mathbf{G} \mathbf{Q}_2
\mathbf{G}^{\dag})^{-1}\mathbf{G}.  \label{eq:B}
\end{align} 
Then, we optimize the right-hand side of \eqref{reformuP21} by omitting the constant terms. Then problems \eqref{bsmm1_1} and \eqref{bsmm1_2} are re-expressed as 
\begin{align} \label{reformuP31}
	\mathbf{Q}^{(i)}_1= {\rm arg}\max \limits_{\mathbf{Q}_1} \; w_1 & \log|\mathbf{I}_{n_1} +(\mathbf{I}_{n_1} + \mathbf{H}_1 \mathbf{Q}_2
	\mathbf{H}_1^\dag)^{-1}\mathbf{H}_1 \mathbf{Q}_1
	\mathbf{H}_1^\dag| \notag \\
	& -  {\rm tr}[(\lambda \mathbf{I}_{n_t} - \mathbf{A}^{\dag}_{1})\mathbf{Q}_1], 
\end{align}
and
\begin{align} \label{reformuP41}
	\mathbf{Q}^{(i)}_2 &= {\rm arg }\max \limits_{\mathbf{Q}_2} \; w_1 \log|\mathbf{I}_{n_e} + \mathbf{G} \mathbf{Q}_2
	\mathbf{G}^\dag |   \notag \\ 
	& + w_2\log|\mathbf{I}_{n_2}+ \mathbf{H}_2 \mathbf{Q}_2 \mathbf{H}_2^\dag| -  {\rm tr}[(\lambda \mathbf{I}_{n_t} - \mathbf{A}^{\dag}_{2})\mathbf{Q}_2].
\end{align}
Both  \eqref{reformuP31}  and  \eqref{reformuP41} are convex problems and have optimal solutions. Different from the general formula in \eqref{reformuP3} which is solved using the rotation method, a closed-form solution for \eqref{reformuP31} can be obtained by the following lemma \cite{park2015weighted}. 
\begin{lem}{(\cite[Lemma 1]{park2015weighted})} \label{lemopt}
	For some $\mathbf{S} \succ \mathbf{0}$, the optimal solution of the function $\max_{\mathbf{Q}\succcurlyeq \mathbf{0}} w \log|\mathbf{I}+ \mathbf{R}^{-1}\mathbf{H}\mathbf{Q}\mathbf{H}^{\dagger}|-{\rm tr}(\mathbf{SQ})$ is given by 
	\begin{align}
	\mathbf{Q}^{*} = \mathbf{S}^{-1/2}\mathbf{V}\mathbf{\Lambda}\mathbf{V}^{\dagger}\mathbf{S}^{-1/2}.
	\end{align}
	To use Lemma \ref{lemopt}, we set  $w= w_1$, $\mathbf{S}=\lambda \mathbf{I}_{n_t} - \mathbf{A}^{\dag}$,  and $\mathbf{R}=\mathbf{I}_{n_1} +\mathbf{H}_1 \mathbf{Q}^{(i)}_2 \mathbf{H}_1^{\dag}$ for \eqref{reformuP31}. $\mathbf{V}$, $\mathbf{U}$, and $\mathbf{\Lambda}$ are obtained by eigenvalue decomposition of  $\mathbf{R}^{-1/2}\mathbf{H}\mathbf{S}^{-1/2}=\mathbf{U}{\rm diag}(\sigma_1, \sigma_2, \dots, \sigma_m)\mathbf{V}^{\dagger}$, $\sigma_i \geq 0$, $\forall i$, $\mathbf{\Lambda	}= {\rm diag}[(w -1/\sigma_1^2)^{+},  \dots, (w -1/\sigma_m^2)^{+}]$, and $(x)^{+}=\max(x,0)$.
\end{lem}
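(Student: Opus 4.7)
The plan is to reduce the problem to a standard MIMO waterfilling problem by two successive whitening transformations. First I would absorb the cost matrix $\mathbf{S}$ via the change of variables $\tilde{\mathbf{Q}} = \mathbf{S}^{1/2}\mathbf{Q}\mathbf{S}^{1/2}$, which is a bijection on the PSD cone because $\mathbf{S}\succ\mathbf{0}$; by cyclicity of the trace this turns $\mathrm{tr}(\mathbf{S}\mathbf{Q})$ into the plain trace $\mathrm{tr}(\tilde{\mathbf{Q}})$. Next, applying Sylvester's determinant identity $|\mathbf{I}+\mathbf{A}\mathbf{B}|=|\mathbf{I}+\mathbf{B}\mathbf{A}|$ to push $\mathbf{R}^{-1/2}$ into a symmetric position, the log-det term becomes $\log|\mathbf{I}+\tilde{\mathbf{H}}\tilde{\mathbf{Q}}\tilde{\mathbf{H}}^{T}|$ with the whitened effective channel $\tilde{\mathbf{H}} = \mathbf{R}^{-1/2}\mathbf{H}\mathbf{S}^{-1/2}$.

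After this reduction the problem reads $\max_{\tilde{\mathbf{Q}}\succcurlyeq\mathbf{0}} w\log|\mathbf{I}+\tilde{\mathbf{H}}\tilde{\mathbf{Q}}\tilde{\mathbf{H}}^{T}|-\mathrm{tr}(\tilde{\mathbf{Q}})$, a classical concave program in $\tilde{\mathbf{Q}}$. Using the SVD $\tilde{\mathbf{H}}=\mathbf{U}\,\mathrm{diag}(\sigma_1,\dots,\sigma_m)\mathbf{V}^{T}$ and defining the rotated variable $\tilde{\mathbf{Q}}'=\mathbf{V}^{T}\tilde{\mathbf{Q}}\mathbf{V}$, the objective simplifies (via another application of the Sylvester identity and unitary invariance of the trace) to $w\log|\mathbf{I}+\mathbf{\Sigma}\tilde{\mathbf{Q}}'\mathbf{\Sigma}|-\mathrm{tr}(\tilde{\mathbf{Q}}')$. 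A short Hadamard-inequality argument then shows that the optimum is attained at a diagonal $\tilde{\mathbf{Q}}'$: replacing $\tilde{\mathbf{Q}}'$ by its diagonal leaves the trace unchanged but cannot decrease the log-determinant. The problem therefore decouples into the scalar subproblems $\max_{q_i\geq0} w\log(1+\sigma_i^2 q_i)-q_i$, whose stationary points yield the waterfilling levels $q_i=(w-1/\sigma_i^2)^{+}$.

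Back-substituting in reverse order gives $\tilde{\mathbf{Q}}^{\ast} = \mathbf{V}\mathbf{\Lambda}\mathbf{V}^{T}$ with $\mathbf{\Lambda}=\mathrm{diag}[(w-1/\sigma_1^2)^{+},\dots,(w-1/\sigma_m^2)^{+}]$ and hence $\mathbf{Q}^{\ast}=\mathbf{S}^{-1/2}\mathbf{V}\mathbf{\Lambda}\mathbf{V}^{T}\mathbf{S}^{-1/2}$, exactly the claimed formula. The main technical step is the diagonalization claim; once that is in place the rest is bookkeeping, and concavity of the original objective in $\mathbf{Q}$ guarantees that the candidate produced by the construction is in fact a global maximizer. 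A fallback route, if one prefers to avoid Hadamard, is to verify the KKT conditions directly at $\mathbf{Q}^{\ast}$: stationarity requires $w\mathbf{H}^{T}(\mathbf{R}+\mathbf{H}\mathbf{Q}^{\ast}\mathbf{H}^{T})^{-1}\mathbf{H}-\mathbf{S}+\mathbf{\Gamma}=\mathbf{0}$ together with $\mathbf{\Gamma}\succcurlyeq\mathbf{0}$ and $\mathrm{tr}(\mathbf{\Gamma}\mathbf{Q}^{\ast})=0$, both of which reduce to the scalar waterfilling identities $w\sigma_i^2/(1+\sigma_i^2 q_i)=1$ on the active set and $q_i=0$ with $\mathbf{\Gamma}$ absorbing the slack on the inactive set after expressing everything in the $(\mathbf{V},\mathbf{\Sigma})$ basis.
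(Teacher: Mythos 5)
Your derivation is correct and complete: the two whitening changes of variables, the Sylvester/unitary-invariance reductions, the Hadamard argument for diagonal optimality, and the scalar waterfilling all go through, and concavity of the objective makes the constructed point a global maximizer. The paper itself gives no proof of this lemma --- it is imported verbatim by citation from the reference --- and your whitening-plus-waterfilling argument is precisely the standard derivation underlying that cited result, so there is nothing in the paper to diverge from. The only point worth tightening is the case of a rectangular or rank-deficient $\tilde{\mathbf{H}} = \mathbf{R}^{-1/2}\mathbf{H}\mathbf{S}^{-1/2}$, where directions with $\sigma_i = 0$ contribute only to the trace penalty and hence receive $q_i = 0$, which your $(w - 1/\sigma_i^2)^{+}$ formula already handles if one adopts the convention $1/0 = +\infty$.
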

{The optimal covariance matrix of each iteration  in \eqref{reformuP31} can be obtained by the above lemma. 
 To achieve the secrecy rate region, we can  set $K=2$ in Algorithm~\ref{alg:WSRalgorithm}, and analytically solve for $\mathbf{Q}_1^{(i)}$  using  Lemma~\ref{lemopt}.}

\section{Simulation Results}\label{sec:simulation}

In this section, we provide three examples to illustrate the effect of the order of encoding and to compare  our signaling design with ZF beamforming. 
\begin{figure*}[t]
	\centering
	\begin{minipage}[t]{0.32\textwidth}
		\centering
		\includegraphics[width=6.2cm]{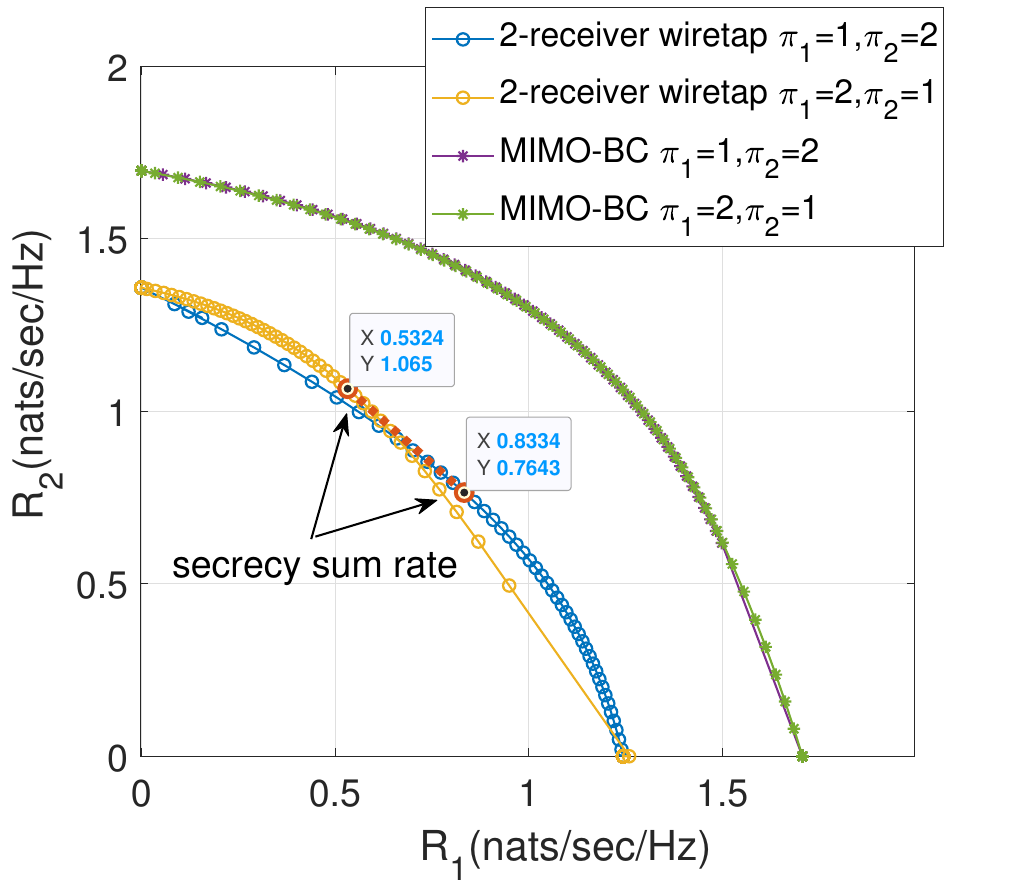}
		\caption{{Secrecy rate regions} for $K=2$. It is seen that  encoding is determining. }
		\label{fig:dpc1}
	\end{minipage} 
	\begin{minipage}[t]{0.32\textwidth}
		\centering
		\includegraphics[width=6.2cm]{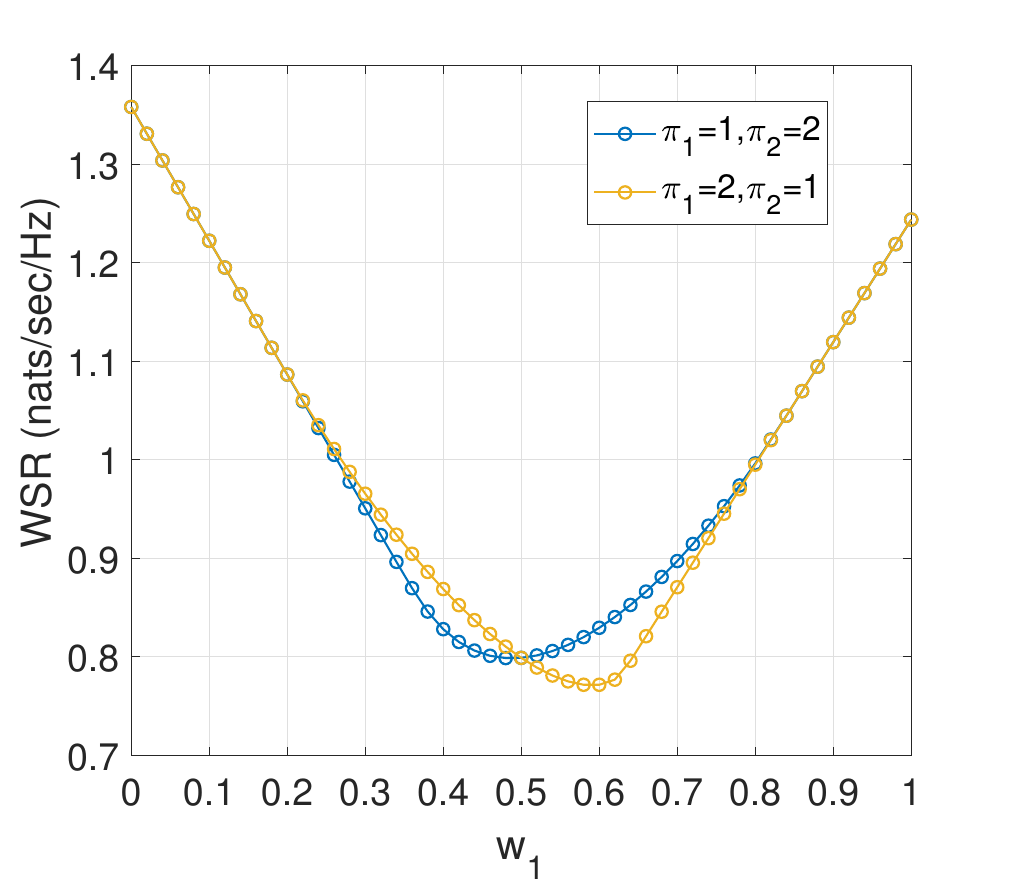}
		\caption{WSR versus the weight of the first user  for $K=2$. }
		\label{fig:order}
	\end{minipage} 
	\begin{minipage}[t]{0.32\textwidth}
		\centering
		\includegraphics[width=6.2cm]{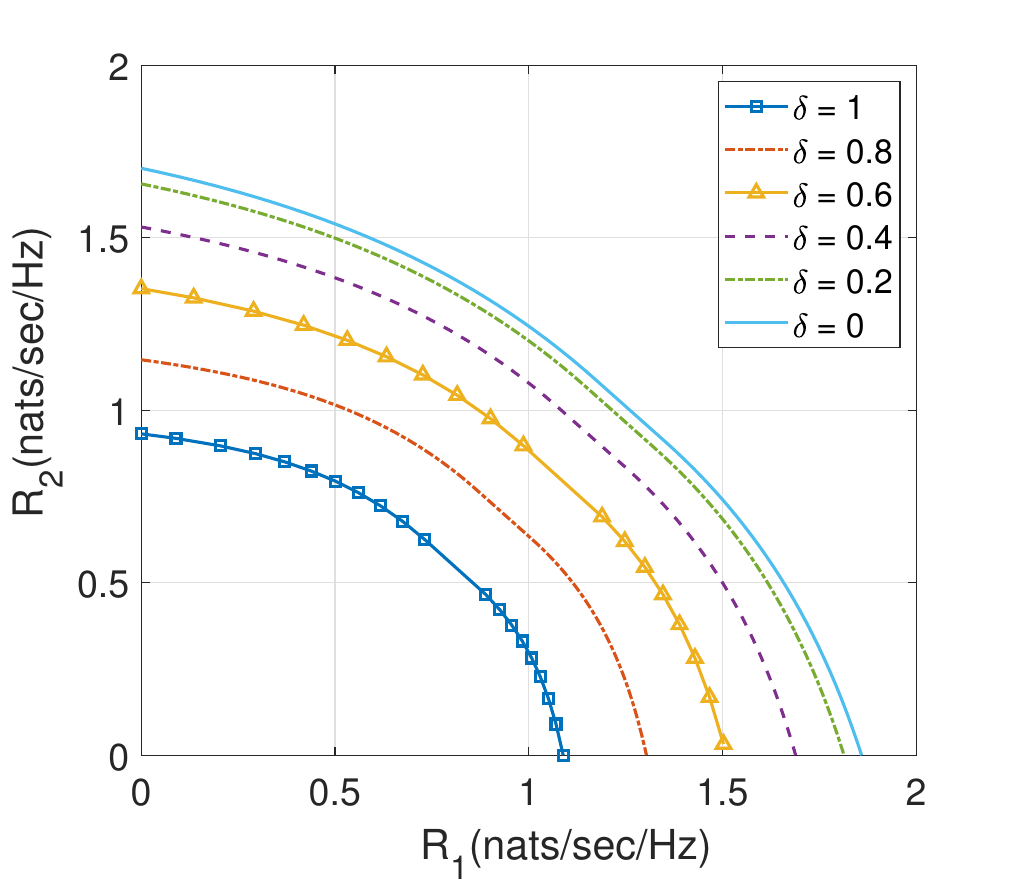}
		\caption{Secrecy rate region for different values of $\delta$  (the strength of Eve's channel). }
		\label{fig:dpc2}
	\end{minipage} 
\end{figure*} 

\textbf{Example 1:} We set the channels and power  the same as \cite{park2016secrecy}, which are given by
\begin{align}
\mathbf{H}_1& =\left[
\begin{matrix} 
1 &  -0.5 \\
0.5 &  2
\end{matrix}\right], \;
\mathbf{H}_2  =  \left[
\begin{matrix} 
-0.3 &  1 \\
2.0 &  -0.4
\end{matrix}\right], \notag \\
\mathbf{G}& =\left[
\begin{matrix}  0.8 & -1.6
\end{matrix}\right], \; P=1. \notag
\end{align}
In this case, $n_t=2$, {$n_1=n_2=2$,} and $n_e=1$. We set $w_1 = w_2=0.5$ to verify the subcase of the WSR, i.e.,  sum-rate. The accuracy parameters for Algorithm~\ref{alg:WSRalgorithm} are set as  $\epsilon_1 = \epsilon_2 = 10^{-5}$. Algorithm~\ref{alg:WSRalgorithm} achieves the  same results as \cite{park2016secrecy}. If receiver 1 is encoded first and then receiver 2,  i.e., $\pi_1 = 1$, $\pi_2 = 2$, 
the optimal secrecy rates are $(R_1, R_2) = (0.8334,  0.7643)$  $\rm{nats/sec/Hz}$ and the secrecy sum-rate is $1.5977$.  If receiver 2 is encoded first and then receiver 1, i.e., $\pi_1 = 2$, $\pi_2 = 1$, the optimal secrecy rates are $(R_1, R_2) = (0.5324,  1.065)$ which yields the same security sum-rate  $1.5977$.  It is seen that a different
	encoding order brings forth different optimal covariance matrices and rates, but the same secrecy sum-rate is achieved, as also noted in \cite{park2016secrecy}. 

If we vary the weight $w_1$ from 0 to 1 with a step 0.01, the  results of the rate pairs are shown in Fig.~\ref{fig:dpc1}. The curve marked with blue circles corresponds to the achievable rate region with ascending encoding order whereas the curve marked with yellow circles corresponds to the descending encoding order.  The purple and green curves represent the corresponding achievable rate regions of the MIMO-BC  \cite{weingartens2006capacity}, i.e., no eavesdropper or $\mathbf{G} = \mathbf{0}$. 

The ordering of the receivers' weights  in the WSR
maximization problem determines the optimal encoding order.  An example is shown in Fig.~\ref{fig:order}. If $w_1 > w_2$, the optimal encoding order is to encode 
message $M_1$ first and message $M_2$ next. If $w_1 < w_2$, the optimal encoding order is to encode $M_2$  first and  $M_1$ next.  If $w_1 = w_2$, the two orders will give the
same sum-rate but each order will give a different corner point
of the capacity region. See Fig.~\ref{fig:dpc1} and Fig.~\ref{fig:order} for illustrative
representations. 


Note that the boundary of the DPC achievable region has a straight-line segment for the secure sum-rate.  
This is because the secrecy capacity region is reached by the convex hull of all encoding orders of rate pairs \cite{ekrem2011secrecy}, \cite[Theorem 13]{ekrem2010secure}. This characteristic of possessing a straight-line segment, i.e.,
a time-division portion or time-sharing, also holds for the MIMO-BC \cite{vishwanath2003duality}. 
The slope for any point on the boundary is $-{w_2}/{w_1}$ because the WSR is $\varphi(P) =w_1 R_1+ w_2R_2$. 
Thus, the straight line with a slope $-1$ {passes} by the two sum-rate points.

\textbf{Example 2:} We consider another example where each node is equipped with two antennas. The channels are given by    
\begin{align} \label{example1}
\mathbf{H}_1 =\left[
\begin{matrix} 
1 &  0.8 \\
0.5 &  2
\end{matrix}\right], \;
\mathbf{H}_2  =  \left[
\begin{matrix} 
0.2 &  1 \\
2 &  0.5
\end{matrix}\right], \;
\mathbf{G} =\delta \left[
\begin{matrix}  1 & 0.4\\
-0.4 &  1
\end{matrix}\right]. 
\end{align}
$\mathbf{H}_1$ and $\mathbf{H}_2$ are set the same as that in \cite{vishwanath2003duality}, and $P=1$. The parameter $\delta$ indicates the strength of Eve's channel. 
The achievable (secure) rate regions are shown in Fig.~\ref{fig:dpc2}. 
When $\delta=0$, the eavesdropping channel becomes $\mathbf{G} = \mathbf{0}$, and thus the problem becomes the MIMO-BC and forms an upper bound.  As $\delta$ increases,  the channel of the eavesdropper gets stronger,  and the secure rate region begins to shrink. 
Note that the rate region for the MIMO-BC ($\delta=0$) can be realized by applying the proposed method and Algorithm~\ref{alg:WSRalgorithm} by setting $\mathbf{G} = \mathbf{0}$. This leads to the same results provided by MAC-BC duality \cite{vishwanath2003duality}. The sum-rate for the MIMO-BC in this case is also  $2.2615$, which is the same as that in \cite[Example 1]{vishwanath2003duality}.

In addition, we show how much power is allocated to each receiver for different values of $w_1$ and $w_2=1-w_1$ in the case with $\delta=1$. When $w_1$ becomes larger, user~1 is allocated more power while user~2 is allocated less power. Interestingly, when $w_1<0.11$ and the optimal encoding order is user~2 first then user~1 (corresponding to the red points in Fig.~\ref{fig:dpc5}), user~2 is allocated all of the power. When $w_1>0.86$ and the optimal encoding order is user~1 first then user~2 (corresponding to the blue points in Fig.~\ref{fig:dpc5}), all power is allocated to receiver~1. 
 
\begin{figure}[t]
	\centering
	\includegraphics[width=0.45\textwidth]{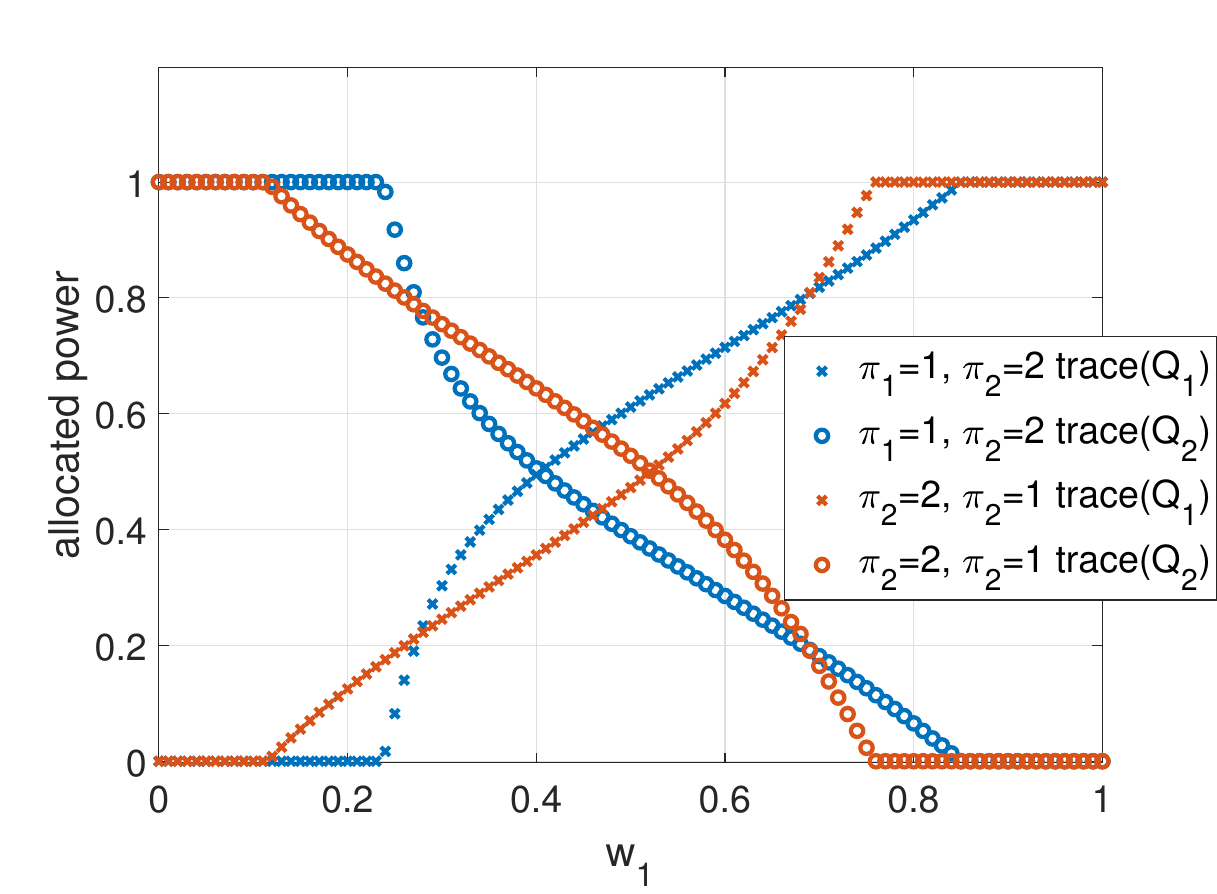}
	\caption{{Power allocated to each user of the two-user wiretap channel for different orders of encoding}.}
	\label{fig:dpc5}
\end{figure}

\textbf{Example 3:} We consider $K=3$, in which the channels are 
\begin{align} \label{example2}
\mathbf{H}_1& =\left[
\begin{matrix} 
-0.4332 + 0.7954i &  -0.3152 - 1.8835i \\
-1.0443 + 1.2282i &  -0.2614 + 0.2198i
\end{matrix}\right], \\
\mathbf{H}_2&  =  \left[
\begin{matrix} 
   1.3389 - 0.5995i &  -0.6924 - 0.4542i \\
-1.2542 + 0.1338i &  -2.1644 + 0.6520i
\end{matrix}\right], \notag \\
\mathbf{H}_3&  =  \left[
\begin{matrix} 
 1.0291 - 0.0212i &    -0.3016 - 0.3662i \\
 0.1646 + 0.5179i &    0.3075 + 0.2919i
\end{matrix}\right], \notag \\
\mathbf{G}& = \left[
\begin{matrix} 
  -0.0875 - 0.9443i &   -0.4637 + 0.7799i
\end{matrix}\right]. 
\end{align}

\begin{figure}[h]
	\centering
	\includegraphics[width=0.4\textwidth]{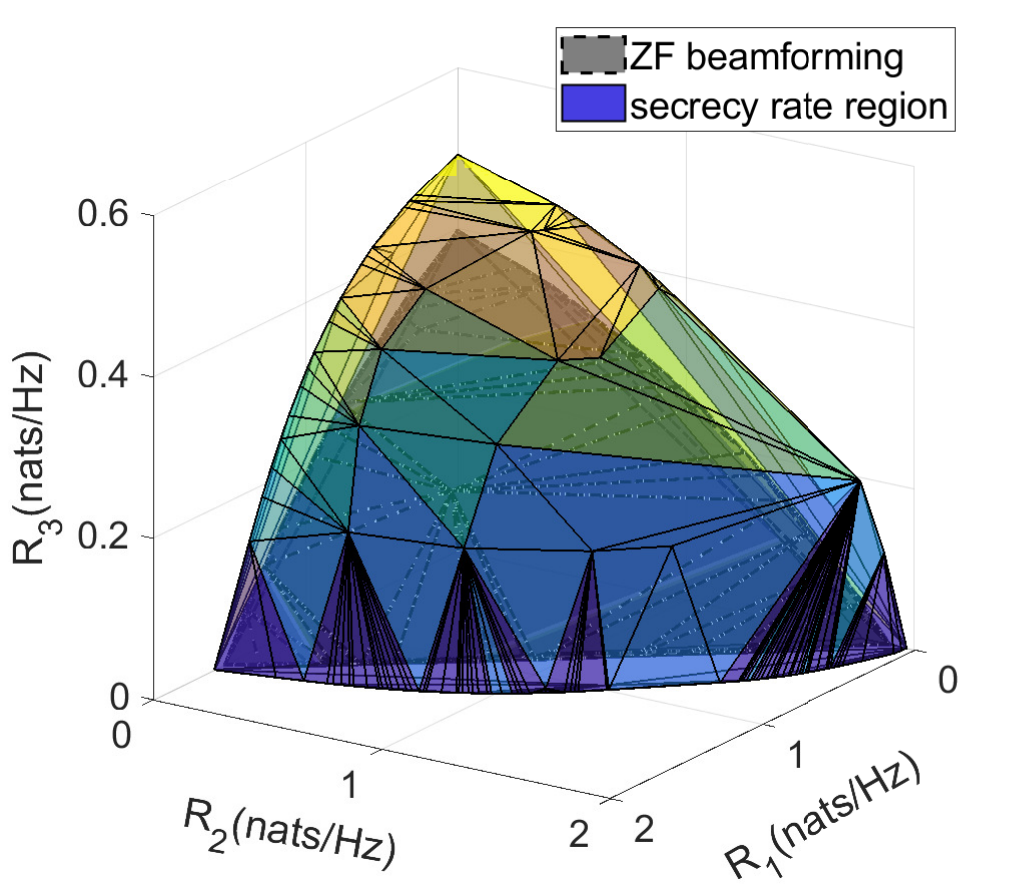}
	\caption{{Secrecy rate region for a three-receiver wiretap channel}. }
	\label{fig:3receivers}
\end{figure}
In this case, $P=1$, $n_t=2$, $n_1=n_2=n_3=2$, and $n_e=1$.
Figure~\ref{fig:3receivers} shows the results of the secrecy achievable rate regions achieved by Algorithm~\ref{alg:WSRalgorithm} and ZF beamforming. 
To guarantee confidentiality, ZF beamforming  \cite{park2016secrecy, tang2016low} projects
each channel $\mathbf{H}_k$ in \eqref{example2} into the null space of $\mathbf{G}$ such that $\mathbf{G}\mathbf{Q}_k = \mathbf{0}$. Hence, the
achievable secrecy rate region is reduced to the WSR maximization of MIMO BC. The projected channels are obtained by the inner product of the channels  \eqref{example2} with null space of  $\mathbf{G}$, i.e.,  
\begin{align*} 
\mathbf{h}_1& =\left[
\begin{matrix} 
 -0.4459 - 1.2743i &  -1.3394 - 0.0374i
\end{matrix}\right]^{\dagger}, \\
\mathbf{h}_2&  =  \left[
\begin{matrix} 
 0.5055 - 0.0275i    &    -2.2319 + 0.6324i 
\end{matrix}\right]^{\dagger}, \notag \\
\mathbf{h}_3&  =  \left[
\begin{matrix} 
   0.3099 - 0.2412i  &      -0.0639 - 0.5733i
\end{matrix}\right]^{\dagger}.
\end{align*}
The proposed method achieves a larger rate  region compared with ZF beamforming, which is  known to be a {suboptimal scheme \cite{park2016secrecy}}. 
If we set any  of the receivers to idle, the problem reduces to two-receiver case. The projection  of the secrecy capacity onto  $(R_1, R_2)$, $(R_2, R_3)$, and $(R_1, R_3)$ planes gives {the capacity region of the corresponding two-receiver wiretap channel.}

\begin{figure*}[t]
	\centering
	\includegraphics[width=0.9\textwidth]{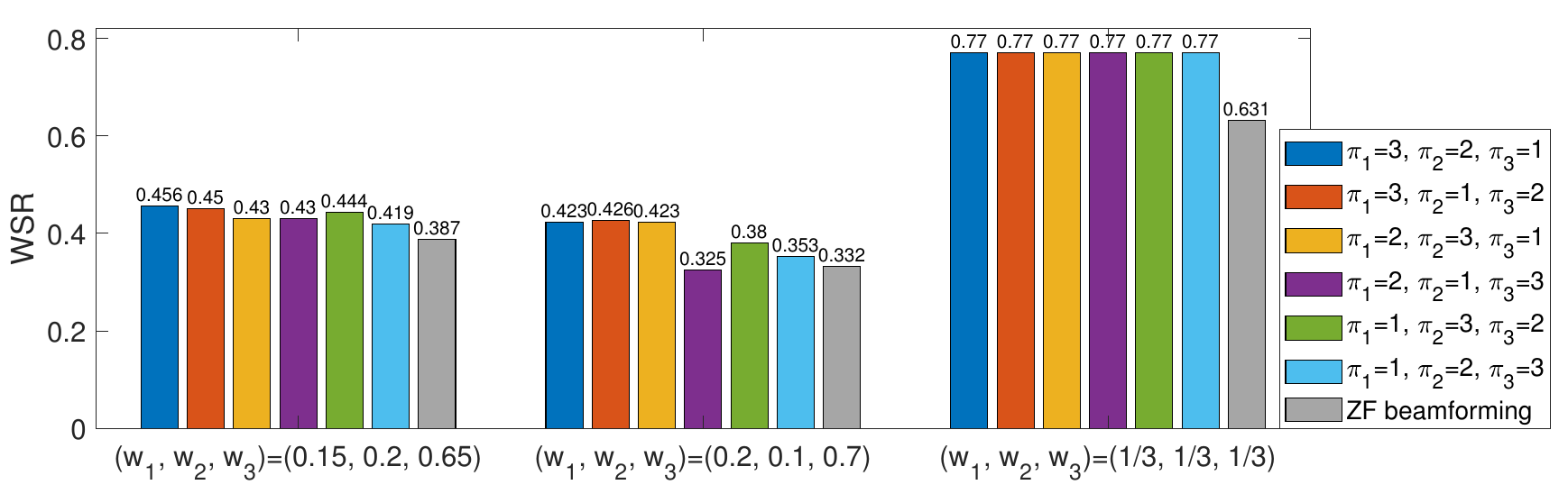}
	\caption{The WSR for the three-receiver wiretap channel {with all six permutations of encoding order as well as ZF for three sets of weights.}} 
	\label{fig:dpc4}
\end{figure*}

Figure~\ref{fig:dpc4} shows {the WSR for three weight allocations and different coding orders.}
Each bar reflects the secure WSR of the system under a specific weighting coefficient. The maximum WSR is achieved with the encoding order $\pi_1 \ge \pi_2 \ge  \dots \ge \pi_K$. For example, for $(w_1, w_2, w_3)=(0.15, 0.2, 0.65)$, the highest WSR is obtained for $\pi_1 = 3$, $\pi_2 = 2$, $\pi_3 = 1$ whereas for  $(w_1, w_2, w_3)=(0.2, 0.1, 0.7)$ the highest WSR is obtained for $\pi_1 = 3$, $\pi_2 = 1$, $\pi_3 = 2$.   {The sum-rate achieved by Algorithm~\ref{alg:WSRalgorithm} is $0.77$ ${\rm nats/sec/Hz}$ for all permutations with $(w_1, w_2, w_3)=(1/3, 1/3, 1/3)$, which verifies Corollary~\ref{corollary}.} 
{ The WSR acived by ZF is far from the proposed signaling design with optimal encoding order. }

\section{Conclusions}\label{sec:conclusion}

The optimal order for the $K$-receiver MIMO-BC with an eavesdropper has been established in this paper. The proof has been achieved by transforming the original WSR maximization problem of the MIMO-BC wiretap channel into an equivalent problem using BC-MAC duality and solving the new problem.  The proof reveals that a receiver having a higher weight in the WSR problem should be encoded earlier. This finding reduces the complexity of determining the capacity region  $K!$ times. In addition, we have solved the WSR maximization problem to find the optimal signaling  for this channel. The secrecy rate equations are nonconvex functions of transmit covariance matrices. However, we have proven that the duality gap is  zero, and KKT conditions are necessary for the solutions. We then use the BSMM to find the covariance matrices in an iterative manner.  In the two-receiver case, we find a closed-form solution for each iteration. Numerical results verify the optimal encoding order. The proposed method outperforms existing solutions, like ZF beamforming, and is more efficient than applying an exhaustive search to the DPC region.

\appendices
\section{Proof of Lemma~\ref{lemma1}} \label{proofLemma}
Based on Theorem 1 in \cite{yu2006dual}, if  the optimization problem in \eqref{eq_WSR}  satisfies the \textit{time-sharing property},  then it has a zero duality gap. Thus, our goal is to prove that the optimization problem \eqref{eq_WSR} satisfies the time-sharing property. If the optimal $\varphi(P)$ in \eqref{eq_WSR}  is a  concave
function of the total power $P$, then the time-sharing property holds \cite{yu2006dual}. 
{
Let $P_1, P_2, \dots, P_K \geq 0$ be the power constraints and {there exists $\hat{P}=\sum_{k=1}^{K}\theta_k P_k$} for some $0 \leq\theta_k \leq 1$ and $\sum_{k=1}^{K} \theta_k =1$.
By contradiction, assume that the time-sharing does not hold for fixed non-negative weights. Then,  {$\varphi(\hat{P})$} must be a convex function, i.e., 
\begin{align} \label{timesharing}
\varphi(\hat{P}) = \varphi(\sum_{k=1}^{K}\theta_k P_k) \le  \sum_{k=1}^{K} \theta_k \varphi(P_k).
\end{align}
Next, assume that the rate tuple $(R^{(k)}_1, R^{(k)}_2, \dots, R^{(k)}_K)$ is achieved by solving \eqref{eq_WSR} for $P=P_k$. Further, let us allocate a fraction $\theta_k$ of time to user $k$.  Time-division multiplexing gives the rate tuple  $$\left(\sum_{k=1}^{K}\theta_k R^{(1)}_k, \dots,  \sum_{k=1}^{K}\theta_k R^{(k)}_k, \dots, \sum_{k=1}^{K}\theta_k R^{(K)}_k \right),$$ and the WSR is the  right-hand side of \eqref{timesharing}.
This implies that the time division scheme with an average power {over a unit time slot  $\bar{P} = \sum_{k=1}^{K}\theta_k P_k$}  achieves a point higher than the secrecy capacity $\mathcal{C}(\{\mathbf{H}_k\}_{k=1}^{K}, \mathbf{G}, \bar{P})$. This is a contradiction because the convex closure operator cannot increase the secrecy
capacity region. Thus, {$\bar{P}=\hat{P}$} and the optimization problem \eqref{eq_WSR} is a concave function of $P$ which implies the time-sharing property.} 
Thus, zero duality gap is guaranteed for the optimization problem \eqref{eq_WSR}
and the KKT conditions are necessary for the optimal solution \cite[Chapter 5, pp. 243]{boyd2004convex}. The capacity is then achieved by the convex closure of the union of the aforementioned rate regions over all possible permutations  $\pi$. This convex hull operation  performs a time-division  multiplexing and also satisfy the time-sharing property.

\bibliography{proposal20220304}

\begin{thebibliography}{10}
\providecommand{\url}[1]{#1}
\csname url@samestyle\endcsname
\providecommand{\newblock}{\relax}
\providecommand{\bibinfo}[2]{#2}
\providecommand{\BIBentrySTDinterwordspacing}{\spaceskip=0pt\relax}
\providecommand{\BIBentryALTinterwordstretchfactor}{4}
\providecommand{\BIBentryALTinterwordspacing}{\spaceskip=\fontdimen2\font plus
\BIBentryALTinterwordstretchfactor\fontdimen3\font minus
  \fontdimen4\font\relax}
\providecommand{\BIBforeignlanguage}[2]{{%
\expandafter\ifx\csname l@#1\endcsname\relax
\typeout{** WARNING: IEEEtran.bst: No hyphenation pattern has been}%
\typeout{** loaded for the language `#1'. Using the pattern for}%
\typeout{** the default language instead.}%
\else
\language=\csname l@#1\endcsname
\fi
#2}}
\providecommand{\BIBdecl}{\relax}
\BIBdecl

\bibitem{qi2022optimal}
Y.~Qi and M.~Vaezi, ``Optimal order of encoding for {Gaussian MIMO}
  multi-receiver wiretap channel,'' in \emph{Proc. of International Symposium
  on Information Theory}, 2022, pp. 2297--2302.

\bibitem{wyner1975wire}
A.~D. Wyner, ``The wire-tap channel,'' \emph{Bell System Technical Journal},
  vol.~54, no.~8, pp. 1355--1387, 1975.

\bibitem{ekrem2012degraded}
E.~Ekrem and S.~Ulukus, ``Degraded compound multi-receiver wiretap channels,''
  \emph{IEEE Transactions on Information Theory}, vol.~58, no.~9, pp.
  5681--5698, 2012.

\bibitem{benammar2015secrecy}
M.~Benammar and P.~Piantanida, ``Secrecy capacity region of some classes of
  wiretap broadcast channels,'' \emph{IEEE Transactions on Information Theory},
  vol.~61, no.~10, pp. 5564--5582, 2015.

\bibitem{vaezi2019noma}
M.~Vaezi and H.~V. Poor, ``{NOMA: An} information-theoretic perspective,'' in
  \emph{Multiple Access Techniques for 5G Wireless Networks and Beyond}.\hskip
  1em plus 0.5em minus 0.4em\relax Cham, Switzerland: Springer, 2019, pp.
  167--193.

\bibitem{ekrem2011secrecy}
E.~Ekrem and S.~Ulukus, ``The secrecy capacity region of the {Gaussian MIMO}
  multi-receiver wiretap channel,'' \emph{IEEE Transactions on Information
  Theory}, vol.~57, no.~4, pp. 2083--2114, 2011.

\bibitem{liu2010vector}
R.~Liu, T.~Liu, H.~V. Poor, and S.~Shamai, ``A vector generalization of costa's
  entropy-power inequality with applications,'' \emph{IEEE Transactions on
  Information Theory}, vol.~56, no.~4, pp. 1865--1879, 2010.

\bibitem{bagherikaram2013secrecy}
G.~Bagherikaram, A.~S. Motahari, and A.~K. Khandani, ``The secrecy capacity
  region of the {Gaussian MIMO} broadcast channel,'' \emph{IEEE Transactions on
  Information Theory}, vol.~59, no.~5, pp. 2673--2682, 2013.

\bibitem{liu2008maximum}
J.~Liu, Y.~T. Hou, and H.~D. Sherali, ``On the maximum weighted sum-rate of
  {MIMO Gaussian} broadcast channels,'' in \emph{Proc. IEEE International
  Conference on Communications}, 2008, pp. 3664--3668.

\bibitem{vishwanath2003duality}
S.~Vishwanath, N.~Jindal, and A.~Goldsmith, ``Duality, achievable rates, and
  sum-rate capacity of {Gaussian MIMO} broadcast channels,'' \emph{IEEE
  Transactions on Information Theory}, vol.~49, no.~10, pp. 2658--2668, 2003.

\bibitem{bagherikaram2010secure}
G.~Bagherikaram, A.~S. Motahari, and A.~K. Khandani, ``On the secure
  degrees-of-freedom of the multiple-access-channel,'' \emph{arXiv preprint
  arXiv:1003.0729}, 2010.

\bibitem{xie2013secure}
J.~Xie and S.~Ulukus, ``Secure degrees of freedom of the {Gaussian} multiple
  access wiretap channel,'' in \emph{Proc. IEEE International Symposium on
  Information Theory}, 2013, pp. 1337--1341.

\bibitem{lee2007high}
J.~Lee and N.~Jindal, ``High {SNR} analysis for {MIMO} broadcast channels:
  Dirty paper coding versus linear precoding,'' \emph{IEEE Transactions on
  Information Theory}, vol.~53, no.~12, pp. 4787--4792, 2007.

\bibitem{park2016secrecy}
D.~Park, ``Secrecy sum rates of {MIMO} multi-receiver wiretap channels,''
  \emph{IEEE Communications Letters}, vol.~20, no.~9, pp. 1804--1807, 2016.

\bibitem{tang2016low}
Y.~Tang, C.~Sheng, Y.~Hu, H.~Yu, and X.~Zhang, ``Low-complexity beamforming
  designs of sum secrecy rate maximization for the {Gaussian MISO}
  multi-receiver wiretap channel,'' in \emph{Proc. IEEE International
  Conference on Acoustics, Speech and Signal Processing}, 2016, pp. 2174--2178.

\bibitem{he2017design}
B.~He, A.~Liu, N.~Yang, and V.~K. Lau, ``On the design of secure non-orthogonal
  multiple access systems,'' \emph{IEEE Journal on Selected Areas in
  Communications}, vol.~35, no.~10, pp. 2196--2206, 2017.

\bibitem{razaviyaynunified}
M.~Razaviyayn, M.~Hong, and Z.-Q. Luo, ``A unified convergence analysis of
  block successive minimization methods for nonsmooth optimization,''
  \emph{SIAM Journal on Optimization}, vol.~23, no.~2, pp. 1126--1153, 2013.

\bibitem{park2015weighted}
D.~Park, ``Weighted sum rate maximization of {MIMO} broadcast and interference
  channels with confidential messages,'' \emph{IEEE Transactions on Wireless
  Communications}, vol.~15, no.~3, pp. 1742--1753, 2015.

\bibitem{ekrem2010secure}
E.~Ekrem and S.~Ulukus, ``Secure broadcasting using multiple antennas,''
  \emph{Journal of Communications and Networks}, vol.~12, no.~5, pp. 411--432,
  2010.

\bibitem{khisti2010secure}
A.~Khisti and G.~W. Wornell, ``{Secure transmission with multiple
  antennas—Part II: The MIMOME wiretap channel},'' \emph{IEEE Transactions on
  Information Theory}, vol.~11, no.~56, pp. 5515--5532, 2010.

\bibitem{oggier2011secrecy}
F.~Oggier and B.~Hassibi, ``{The secrecy capacity of the MIMO wiretap
  channel},'' \emph{IEEE Transactions on Information Theory}, vol.~57, no.~8,
  pp. 4961--4972, 2011.

\bibitem{liu2009note}
T.~Liu and S.~Shamai, ``A note on the secrecy capacity of the multiple-antenna
  wiretap channel,'' \emph{IEEE Transactions on Information Theory}, vol.~55,
  no.~6, pp. 2547--2553, 2009.

\bibitem{fakoorian2012optimal}
S.~A.~A. Fakoorian and A.~L. Swindlehurst, ``Optimal power allocation for
  {GSVD}-based beamforming in the {MIMO Gaussian} wiretap channel,'' in
  \emph{Proc. IEEE International Symposium on Information Theory}, 2012, pp.
  2321--2325.

\bibitem{li2013transmit}
Q.~Li, M.~Hong, H.-T. Wai, Y.-F. Liu, W.-K. Ma, and Z.-Q. Luo, ``Transmit
  solutions for {MIMO} wiretap channels using alternating optimization,''
  \emph{IEEE Journal on Selected Areas in Communications}, vol.~31, no.~9, pp.
  1714--1727, 2013.

\bibitem{vaezi2017journal}
M.~Vaezi, W.~Shin, and H.~V. Poor, ``Optimal beamforming for {Gaussian MIMO}
  wiretap channels with two transmit antennas,'' \emph{IEEE Transactions on
  Wireless Communications}, vol.~16, no.~10, pp. 6726--6735, 2017.

\bibitem{zhang2020rotation}
X.~Zhang, Y.~Qi, and M.~Vaezi, ``A rotation-based method for precoding in
  {Gaussian MIMOME} channels,'' \emph{IEEE Transactions on Communications},
  vol.~69, no.~2, pp. 1189--1200, 2020.

\bibitem{tse1998multiaccess}
D.~N.~C. Tse and S.~V. Hanly, ``Multiaccess fading channels. {I.} polymatroid
  structure, optimal resource allocation and throughput capacities,''
  \emph{IEEE Transactions on Information Theory}, vol.~44, no.~7, pp.
  2796--2815, 1998.

\bibitem{weingartens2006capacity}
H.~Weingarten, Y.~Steinberg, and S.~S. Shamai, ``The capacity region of the
  {Gaussian} multiple-input multiple-output broadcast channel,'' \emph{IEEE
  Transactions on Information Theory}, vol.~52, no.~9, pp. 3936--3964, 2006.

\bibitem{qi2021signaling}
Y.~Qi and M.~Vaezi, ``Signaling design for {MIMO-NOMA} with different security
  requirements,'' \emph{IEEE Transactions on Signal Processing}, vol.~70, pp.
  1389--1401, 2022.

\bibitem{yu2006dual}
W.~Yu and R.~Lui, ``Dual methods for nonconvex spectrum optimization of
  multicarrier systems,'' \emph{IEEE Transactions on Communications}, vol.~54,
  no.~7, pp. 1310--1322, 2006.

\bibitem{boyd2004convex}
S.~Boyd, S.~P. Boyd, and L.~Vandenberghe, \emph{Convex Optimization}.\hskip 1em
  plus 0.5em minus 0.4em\relax Cambridge, U.K.: Cambridge University Press,
  2004.

\bibitem{magnus2019matrix}
J.~R. Magnus and H.~Neudecker, \emph{Matrix Differential Calculus with
  Applications in Statistics and Econometrics}.\hskip 1em plus 0.5em minus
  0.4em\relax New York, NY, USA: Wiley, 1999.

\bibitem{scutari2013decomposition}
G.~Scutari, F.~Facchinei, P.~Song, D.~P. Palomar, and J.-S. Pang,
  ``Decomposition by partial linearization: Parallel optimization of
  multi-agent systems,'' \emph{IEEE Transactions on Signal Processing},
  vol.~62, no.~3, pp. 641--656, 2013.

\bibitem{qi2020secure}
Y.~Qi and M.~Vaezi, ``Secure transmission in {MIMO-NOMA} networks,'' \emph{IEEE
  Communications Letters}, vol.~24, no.~12, pp. 2696--2700, 2020.

\bibitem{ordentlich2014approximate}
O.~Ordentlich, U.~Erez, and B.~Nazer, ``The approximate sum capacity of the
  symmetric {Gaussian} {K}-user interference channel,'' \emph{IEEE Transactions
  on Information Theory}, vol.~60, no.~6, pp. 3450--3482, 2014.

\bibitem{arumugam2019covert}
K.~S.~K. Arumugam and M.~R. Bloch, ``Covert communication over a {K}-user
  multiple-access channel,'' \emph{IEEE Transactions on Information Theory},
  vol.~65, no.~11, pp. 7020--7044, 2019.

\bibitem{yu2007transmitter}
W.~Yu and T.~Lan, ``Transmitter optimization for the multi-antenna downlink
  with per-antenna power constraints,'' \emph{IEEE Transactions on Signal
  Processing}, vol.~55, no.~6, pp. 2646--2660, 2007.

\bibitem{lin2014secrecy}
S.-C. Lin and C.-L. Lin, ``On secrecy capacity of fast fading mimome wiretap
  channels with statistical csit,'' \emph{IEEE transactions on wireless
  communications}, vol.~13, no.~6, pp. 3293--3306, 2014.

\end{thebibliography}
\bibliographystyle{IEEEtran}

\begin{IEEEbiography}
	[{\includegraphics[width=1in,height=1.25in,clip,keepaspectratio]{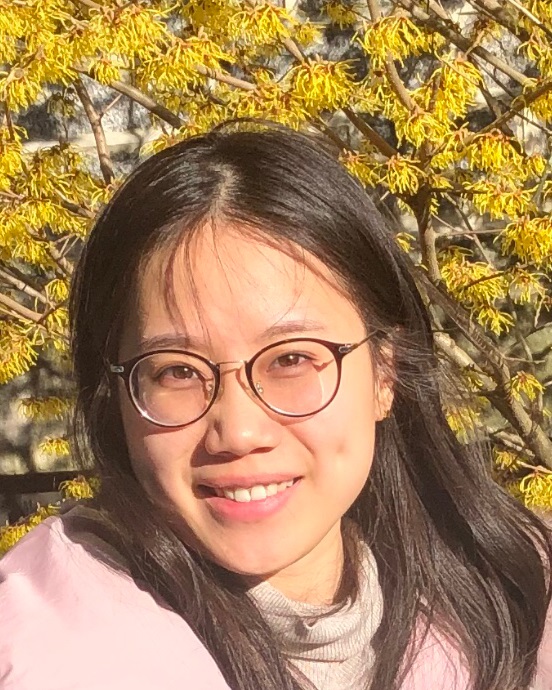}}]{Yue Qi}
	(S'20–M’22) received the B.Eng. and the M.Eng. degrees in Electronic Engineering from Xidian University, and the Ph.D. degree in Electrical and Computer Engineering from Villanova University. Her research interests include NOMA, physical layer security, and signal processing. She is a recipient of the IEEE Communication Society Student Grant in ICC'20, 22. She is currently working as 
	a Senior Research Engineer with the Standards
	and Mobility Innovation Laboratory, Samsung
	Research America, Plano, TX, USA.  
\end{IEEEbiography}

\begin{IEEEbiography}
	[{\includegraphics[width=1in,height=1.25in,clip,keepaspectratio]{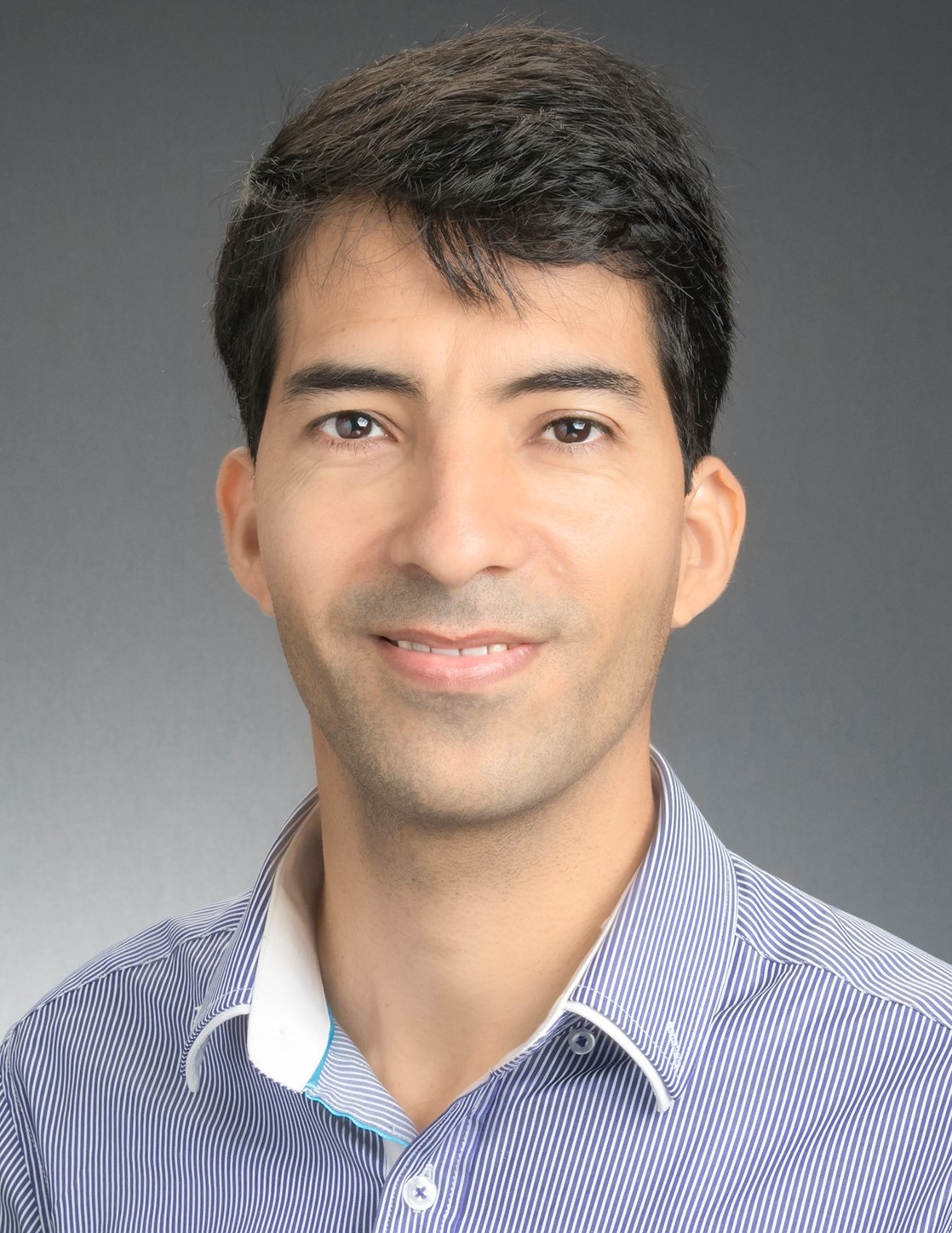}}]{Mojtaba Vaezi} (S’09–M’14–SM’18) received the B.Sc. and M.Sc. degrees from Amirkabir University of Technology (Tehran Polytechnic) and the Ph.D. degree from McGill University, all in Electrical Engineering. From 2015 to 2018, he was with Princeton University as a Postdoctoral Research Fellow and Associate Research Scholar. He is currently an Assistant Professor of ECE at Villanova University. Before joining Princeton, he was a researcher at Ericsson Research in Montreal, Canada. His research interests include the broad areas of signal processing and machine learning for wireless communications with an emphasis on physical layer security and fifth-generation (5G) and beyond radio access technologies. Among his publications in these areas is the book \textit{Multiple Access Techniques for 5G Wireless Networks and Beyond} (Springer, 2019). 
	
	Dr. Vaezi is an Editor of \textsc{IEEE Transactions on Communications} and \textsc{IEEE Communications Letters}. He has co-organized six NOMA workshops at IEEE VTC 2017-Spring, Globecom’17, 18, and ICC’18, 19, 20. He is a recipient of several academic, leadership, and research awards, including McGill Engineering Doctoral Award, IEEE Larry K. Wilson Regional Student Activities Award in 2013, the Natural Sciences and Engineering Research Council of Canada (NSERC) Postdoctoral Fellowship in 2014, Ministry of Science and ICT of Korea’s best paper award in 2017, IEEE Communications Letters Exemplary Editor Award in 2018,  the 2020 IEEE Communications Society Fred W. Ellersick Prize, and, the 2021 IEEE Philadelphia
	Section Delaware Valley Engineer of the Year Award. 
\end{IEEEbiography}	

\begin{IEEEbiography}
	[{\includegraphics[width=1in,height=1.25in,clip,keepaspectratio]{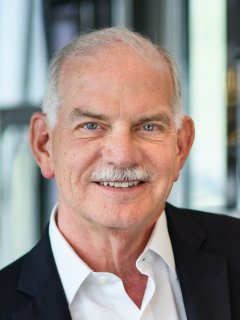}}]{H. Vincent Poor}  (S’72, M’77, SM’82, F’87) received the Ph.D. degree in EECS from Princeton University in 1977.  From 1977 until 1990, he was on the faculty of the University of Illinois at Urbana-Champaign. Since 1990 he has been on the faculty at Princeton, where he is currently the Michael Henry Strater University Professor. During 2006 to 2016, he served as the dean of Princeton’s School of Engineering and Applied Science. He has also held visiting appointments at several other universities, including most recently at Berkeley and Cambridge. His research interests are in the areas of information theory, machine learning and network science, and their applications in wireless networks, energy systems and related fields. Among his publications in these areas is the recent book Machine Learning and Wireless Communications.  (Cambridge University Press, 2022). Dr. Poor is a member of the National Academy of Engineering and the National Academy of Sciences and is a foreign member of the Chinese Academy of Sciences, the Royal Society, and other national and international academies. He received the IEEE Alexander Graham Bell Medal in 2017.
\end{IEEEbiography}

\end{document}